\documentclass[]{interact}

\usepackage{epstopdf}
\usepackage[caption=false]{subfig}

\usepackage{mathtools}    
\allowdisplaybreaks       
\usepackage{amsmath}

\usepackage[numbers,sort&compress]{natbib}
\usepackage
	{todonotes}

\usepackage{color}

\newcommand{\blu}{\textcolor{blue}}

\newcommand{\bn}{\eta}
\newcommand{\e}{\varepsilon}
\newcommand{\D}{\delta}
\newcommand{\se}{\sqrt{\varepsilon}}
\newcommand{\E}{\mathbb{E}}

\newcommand{\R}{\mathbb{R}}

\newcommand{\CalI}{\mathcal{I}}
\newcommand{\CalD}{\mathcal{D}}

\newcommand{\CalT}{\mathcal{T}}

\newcommand{\CalO}{\mathcal{O}}

\newcommand{\br}{\mathbf{r}}
\newcommand{\bs}{\mathbf{s}}
\newcommand{\bx}{\mathbf{x}}
\newcommand{\by}{\mathbf{y}}
\newcommand{\bu}{\mathbf{u}}
\newcommand{\bz}{\mathbf{z}}
\newcommand{\bk}{\mathbf{k}}

\newcommand{\bq}{\mathbf{q}}

\newcommand{\bv}{\mathbf{v}}

\newcommand{\bX}{\mathbf{X}}

\newcommand{\tbk}{\tilde{\mathbf{k}}}
 \newcommand{\tbq}{\tilde{\mathbf{q}}}

\newcommand{\bcals}{{s}}

\newcommand{\hatr}{\hat{a}_{\rm tr}}

\newcommand{\hbr}{\hat{b}_{\rm ref}}

\newcommand{\hp}{\hat{u}}

\newcommand{\hau}{\hat{a}_{0}}

\newcommand{\Tau}{\tilde{a}}
\newcommand{\Tbu}{\tilde{b}}

\newcommand{\lwu}{\bcals^\e_{0}}

\newcommand{\ba}{\begin{eqnarray}}
\newcommand{\ea}{\end{eqnarray}}
\newcommand{\ban}{\begin{eqnarray*}}
\newcommand{\ean}{\end{eqnarray*}}

\newtheorem{proposition}{Proposition}[section]
\newtheorem{discussion}{Discussion}[section]
\newtheorem{example}{Example}[section]

\newtheorem{assumption}{Assumption}
\theoremstyle{nonumberplain}

\newtheorem*{example31B}{Example \ref{ex1} Revisited} 
\newtheorem*{example11B}{Example \ref{ex:chart}  Revisited} 
\newtheorem{remark}{Remark}[section]

\begin{document}

\articletype{ARTICLE TEMPLATE}

\title{Imaging Through Rough Interfaces: The Shower Curtain Effect}

\author{
\name{Cristophe  Gomez\textsuperscript{a}\thanks{CONTACT K.  Sølna. Email: ksolna@uci.edu} and Knut S\o lna\textsuperscript{b}}
\affil{\textsuperscript{a} Aix Marseille Univ, CNRS, I2M, Marseille, France; \textsuperscript{b} UC Irvine, Department of Mathematics, Irvine, CA, USA}
}

\maketitle

\begin{abstract}
The quality of an image observed through a scattering layer, such as a shower curtain, depends strongly on the relative position of the scattering layer between the object and the observer. This well-known phenomenon is commonly referred to as the \emph{shower curtain effect}. When the scattering layer is placed close to the observer, the image is strongly degraded, whereas if it is located close to the object, the object may still be observed with relatively high resolution.

Previous analyses of the shower curtain effect have primarily modeled the scattering layer as a section of a random medium. In this work, we present a new analysis in which the scattering layer is modeled instead as a rough interface, a description that arises naturally in many physical configurations. Within this framework, we derive explicit characterizations of both the image resolution and the signal-to-noise ratio, and determine how these quantities depend on the statistical properties of the rough interface and on its relative location between the object and the observer.
\end{abstract}

\begin{flushleft}
\textbf{Keywords.} 
Wave Propagation, Rough Surface, Shower Curtain Effect, Random Medium, Scattering, Wave Speckle
\end{flushleft}


 { \blu{        \tableofcontents  }  }  
 
\section{Introduction}

An intriguing phenomenon in optics is that it is often easier to see a person standing behind a shower curtain than it is for that person to see the observer. This phenomenon is commonly referred to as the \emph{shower curtain effect}. Its origin lies in the interplay between wave scattering and geometry: the quality of the observed image depends not only on the scattering properties of the intervening complex medium, but also on the relative position of that medium between the source and the observer. When the scattering layer is located close to the observer, the image quality deteriorates substantially, whereas when it is located close to the object, the object may still be observed with relatively high resolution.

The shower curtain effect provides a striking example for  how randomness and multiple scattering may influence wave imaging. Beyond its conceptual appeal, the phenomenon is closely connected to important applications in atmospheric optics, remote sensing, biomedical imaging, and imaging through cluttered or turbid environments in general. A central challenge is therefore to develop a precise mathematical description of the effect and to determine how scattering geometry and medium statistics govern image resolution and image stability.

In this paper we present a mathematical analysis of the shower curtain effect in the setting where the scattering layer is modeled as a random rough interface. More precisely, the interface is described as a random two-dimensional surface separating two homogeneous media. This framework is natural for many physical configurations in which scattering is generated predominantly by interface fluctuations rather than by volumetric inhomogeneities.

The present work is in part  motivated by the pioneering contributions of Akira Ishimaru to wave propagation and imaging in random media. His 
monograph \cite{ishimaru78} helped establish a modern theoretical framework for understanding wave scattering in complex environments.  
Moreover, Ishimaru and collaborators were among the first to formulate and analyze the shower curtain effect mathematically in a series of influential papers \citep{yuga86,ishimaru03,ishimaru07}. 

Most previous mathematical studies of the shower curtain effect have been formulated in the context of random volumetric media and build directly on the classical theory of wave propagation in random media \cite{ishimaru78}. By contrast, here we analyze the phenomenon in the setting of random rough interfaces. There is a substantial literature on wave scattering by rough surfaces and interfaces, including perturbative expansions, Kirchhoff approximations, and related asymptotic methods
\cite{Darmon,ishimaru91,TME,Li}. These approaches are physically insightful, but they often rely on restrictive assumptions regarding scaling regimes.
In this work we  consider an asymptotic framework where interfaces are characterized  through their statistical properties and build on the mathematical framework developed in \citep{gomezsolna26} so that 
the scaling regime is precisely   characterized. 
Homogenization methods have also been used to derive effective interface conditions for rapidly varying random interfaces \cite{gallas,nevard}. Our emphasis, however, is on the statistical characterization of the diffuse speckle field generated by the rough interface and on understanding how this diffuse field governs the shower curtain effect. In fact, as we show below, for a homogenized effective flat interface there is no shower curtain effect, highlighting the essential role played by random scattering fluctuations.

An early experimental illustration of the phenomenon can be found in \citep{dror}. In \citep{ishimaru03}, Ishimaru and co-authors investigated the effect in the context of atmospheric cloud scattering and analyzed how the relative location of cloud layers influences imaging performance. Numerical results based on radiative transfer modeling were presented there. In \citep{ishimaru07}, the authors considered a related phenomenon in time-reversal imaging. In that setting, a source emits a wave field that is recorded on a time-reversal mirror, phase conjugated, and re-emitted, leading to refocusing at the original source location. Their numerical study demonstrated a shower curtain effect for time-reversal focusing: the refocusing quality improves significantly when the scattering section is located closer to the source. The analysis there relied in part on paraxial approximations and Gaussian closure assumptions for higher-order wave moments. In the rough-interface setting considered here, the fourth-order statistics can instead be computed more directly, without requiring a Gaussian approximation.

More recently, the shower curtain effect has also been investigated in the context of speckle imaging
\citep{edrei,xie,xie2}. In these works, autocorrelation information extracted from speckle patterns is combined with phase retrieval algorithms to reconstruct hidden objects or mask functions. Related aspects of the effect in active imaging and illumination design are discussed in \citep{tremblay15}. In \citep{thrane,thrane2}, the phenomenon is examined in the context of Optical Coherence Tomography, where the authors argue that imaging performance can be understood and improved in part through the geometric mechanisms underlying the shower curtain effect.

In the present paper we focus on passive source imaging through a random rough interface and derive explicit characterizations of how the interface statistics and geometric configuration affect imaging performance. A principal objective is to quantify image resolution and to determine precisely how it depends on the interface location and statistical properties. However, high resolution alone is not sufficient for successful imaging if the resulting image is overwhelmed by fluctuations. We therefore analyze not only resolution, but also the signal-to-noise ratio associated with the imaging functionals arising in the optical configuration considered here.

 We next describe a motivating physical experiment. 
        
\begin{example}[USAF Resolution Chart]\label{ex:chart}
A particularly illuminating experimental demonstration of the shower curtain effect is presented in \citep{xie2}. In this experiment, a collimated laser beam illuminates a 1951 USAF resolution chart, and the transmitted wave field is subsequently scattered by a ground-glass diffuser placed between the chart and a detector equipped with an imaging lens. The detector records an image of the resolution chart after propagation through the scattering layer.

A striking observation is that the quality of the recorded image depends strongly on the relative location of the scattering layer. As illustrated in Figure~\ref{fig_setup}, the image quality improves significantly as the scattering medium is moved closer to the resolution chart and farther away from the detector. Thus, even though the scattering properties of the diffuser remain unchanged, the imaging performance varies substantially due solely to the geometric configuration. This sensitivity to relative positioning is the hallmark of the shower curtain effect.

The experiment provides a compelling physical motivation for the present work. It raises a number of natural theoretical questions: What precisely determines the image resolution in the presence of scattering? How do the statistical properties of the scattering interface influence the observed image quality? Why does moving the scattering layer closer to the object improve imaging performance? In this paper, we address these questions within a mathematical framework in which the scattering layer is modeled as a random rough interface. The analysis below shows how the shower curtain effect emerges from the interplay between wave propagation, interface statistics, and imaging geometry.

\begin{figure}
\centering
\subfloat{%
\resizebox*{7cm}{!}{\includegraphics{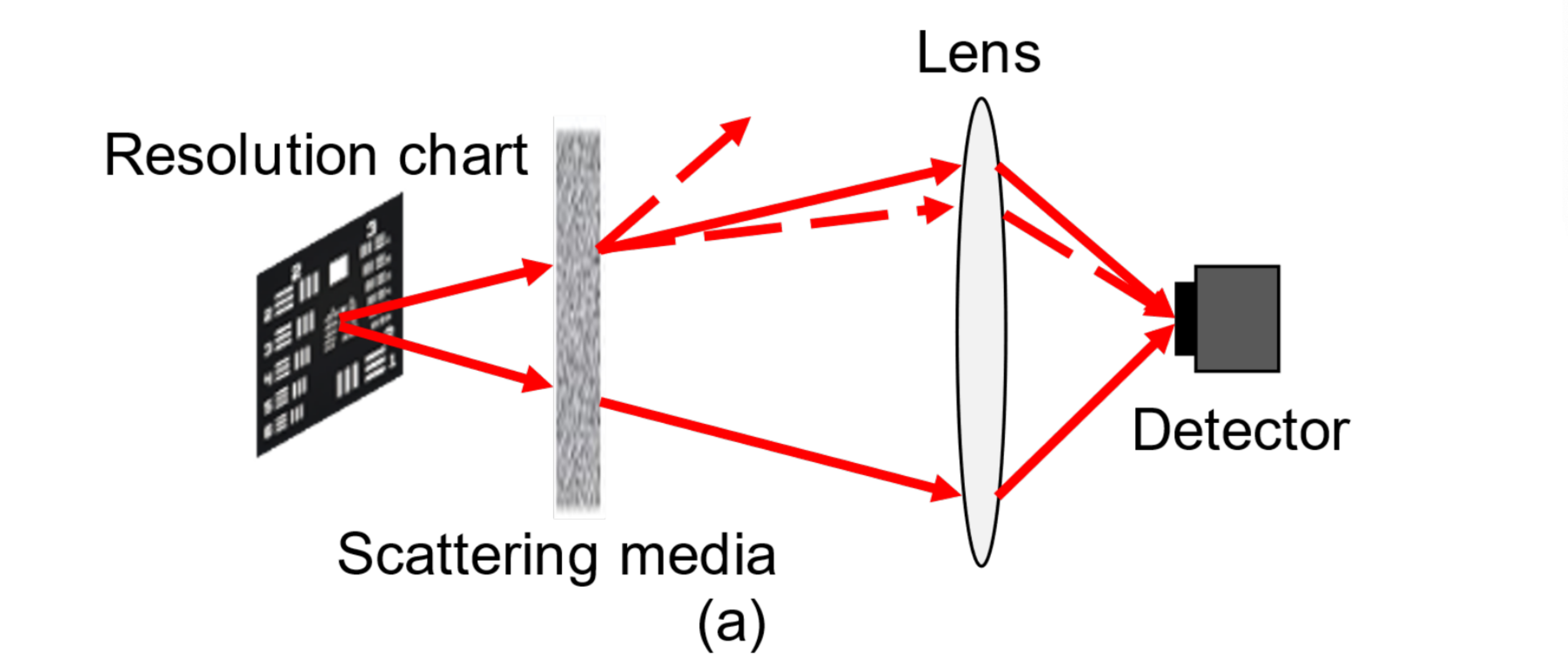}}}\hspace{25pt}
\subfloat{%
\resizebox*{5cm}{!}{\includegraphics{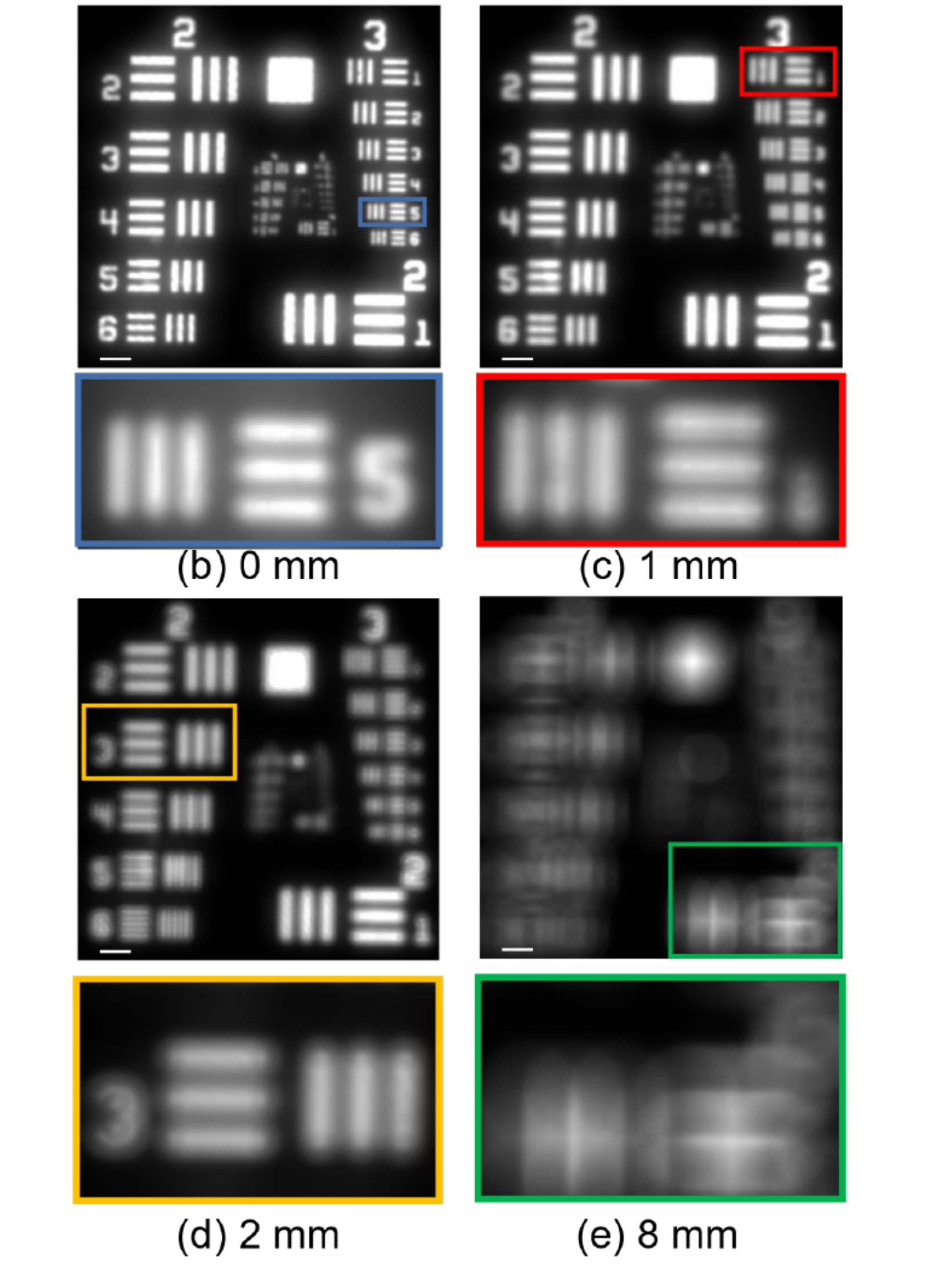}}}\hspace{5pt}

\caption{{\it Figure a)}: Experimental setup with the 1951 USAF resolution chart on the left, a ground-glass scattering layer in the middle, and a detector with an imaging lens on the right. The wave field transmitted through the resolution chart is scattered by the ground glass before reaching the detector. The detector and the resolution chart satisfy a conjugate imaging relationship.
{\it Figure b)--e)} Images recorded for different relative positions of the scattering layer. The distance between the scattering layer and the resolution chart is respectively:
b) 0 mm, c) 1 mm, d) 2 mm, and e) 8 mm (scale bar in the lower left of the subfigures: 400 $\mu$m). The figure illustrates how the image quality deteriorates as the scattering layer is moved away from the object and closer to the detector.
} \label{fig_setup}
\end{figure}
\end{example}

The remainder of the paper is organized as follows. In Section~\ref{sec:beam}, we describe in detail the wave propagation and imaging configuration under consideration and introduce the mathematical framework used throughout the paper. The two imaging modalities analyzed in this work, namely matched field imaging and optical imaging, are then presented in Sections~\ref{sec:match} 
and~\ref{sec:opt}, respectively, where we characterize their resolution and statistical stability properties in the presence of a random rough interface. Finally, in Section~\ref{sec:concl}, we summarize the main conclusions and discuss possible directions for future work.

\section{Beaming Through Rough Interfaces}\label{sec:beam}

We consider three-dimensional linear wave propagation modeled by the scalar wave equation
\begin{equation}\label{eq:wave_equation}
\Delta u - \frac{1}{c^2(\bx,z)}\partial_t^2 u
= F(t,\bx,z),
\qquad
(t,\bx,z)\in \mathbb{R}\times\mathbb{R}^2\times\mathbb{R}.
\end{equation}
Here, the coordinate $z$ represents the principal propagation direction, while $\bx\in\mathbb{R}^2$ denotes the transverse coordinates. The Laplacian decomposes as
\[
\Delta=\Delta_\perp+\partial_z^2,
\]
where $\Delta_\perp$ acts on the transverse variables only. The forcing term $F$ models the source.

We model the medium as two homogeneous half-spaces separated by a random rough interface. Accordingly, the wave speed is given by
\begin{equation}\label{eq:wave_speed}
c(\bx,z)=
\begin{cases}
 c_0 & \text{if }z<z_{\rm int}+\sigma V\Big(\dfrac{\bx}{l_{\rm c}}\Big)
=: z_{\rm int}(\bx),  \\
c_1 & \text{if } z>z_{\rm int}(\bx).
\end{cases}
\end{equation}
The random field $V$ models the interface fluctuations, while $\sigma$ and $l_{\rm c}$ denote respectively the typical amplitude and correlation radius of the interface roughness.

\begin{assumption}\label{ass1}
The random field $V:\mathbb{R}^2\to\mathbb{R}$ is assumed to be mean-zero, stationary, isotropic, bounded, rapidly mixing, and   smooth,  {with bounded second-order derivatives.}
\end{assumption}
We will consider a time harmonic version of \eqref{eq:wave_equation} with  a forcing term of form 
\begin{equation*}
F(t,\bx,z) = e^{-i\omega_{\rm o} t}
\Psi\Big(\frac{\bx}{r_0}\Big)\delta'(z),
\end{equation*}
which models a localized beam emitted from the plane $z=0$ and propagating predominantly in the   $z$-direction. The parameter $\omega_{\rm o}$ denotes the carrier frequency and $r_0$ the transverse beam radius.
The wave field is assumed to satisfy outgoing radiation conditions as $z\to\pm\infty$.

\begin{assumption}
The source profile $\Psi$ is smooth and rapidly decaying.
\end{assumption}

Our objective is to estimate the source profile $\Psi$ from measurements of the transmitted wave field recorded on an observation plane located at
\[
z=z_{\rm obs}>z_{\rm int},
\]
that is, beyond the rough interface; see Figure~\ref{fig1}.

\begin{figure}
\begin{center}
\begin{picture}(300,140)
\linethickness{.35mm}
\put(40,88){\vector(1,0){10}}
\put(40,96){\vector(1,0){10}}
\thinlines
\put(10,110){$Source, \Psi$}
\put(22,85){\vector(0,1){15}}
\put(24,90){$r_0$}
\put(5,78){\includegraphics[width=1cm,angle=90]{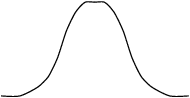}}
\put(82,130){$Rough\ Interface$}
\linethickness{2mm}
\put(220,60){\line(0,1){60}}
\thinlines
\put(200,130){$Detector$}
\put(20,20){\vector(1,0){100}}
\put(60,25){$z_{\rm int}$}
\put(120,55){\line(0,1){65}}
\put(20,40){\vector(1,0){201}}
\put(20,40){\vector(0,1){20}}
\put(12,44){$\bx$}
\put(206,30){$z=z_{\rm obs}$}
\end{picture}
\end{center}
\caption{
Source imaging through a random rough interface. The source is located on the left and the detector on the right.
}
\label{fig1}
\end{figure}

\subsection*{Parameter scaling}

The analysis is based on a separation-of-scales regime. The relevant physical length scales are:
\begin{itemize}
\item the characteristic propagation distance $L_0$,
\item the wavelengths
\[
\lambda_j=\frac{2\pi c_j}{\omega_{\rm o}},
\qquad j=0,1,
\]
\item the beam radius $r_0$,
\item the interface correlation radius $l_{\rm c}$,
\item and the fluctuation amplitude $\sigma$.
\end{itemize}

We assume that both the interface location $z_{\rm int}$ and the observation distance $z_{\rm obs}$ are of order $L_0$.
We consider a high-frequency regime
\[
\e :=
\frac{c_0}{L_0\omega_{\rm o}}
 = 
\frac{\lambda_0}{2\pi L_0}
\ll 1,
\]
  together with a paraxial scaling characterized by
\begin{equation}\label{eq:paraxial}
\frac{r_0^2}{\lambda_0}\sim L_0.
\end{equation}
Equivalently, the Rayleigh length
\[
L_R=\frac{\pi r_0^2}{\lambda_0}
\]
is comparable to the characteristic propagation distance $L_0$. Consequently,
\[
r_0\sim \sqrt{\varepsilon} L_0,
\]
so that
\[
\lambda_0 \ll r_0 \ll L_0.
\]
We further assume a critically scaled rough interface, meaning that the interface fluctuations are of the order of the wavelength:
\begin{equation}\label{eq:sl}
\sigma\sim \lambda_j,
\qquad j=0,1.
\end{equation}

Of particular interest in this paper is the specular scaling regime in which the interface correlation radius is comparable to the beam radius,
\[
l_{\rm c}\sim r_0,
\]
corresponding to the scaling exponent $\gamma=1/2$ introduced below. In this regime the transmitted and reflected fields exhibit random specular effects; see \citep{gomezsolna26}. Another important regime is the speckle regime, in which the interface fluctuations vary on scales much smaller than the beam radius and generate a broad speckle cone.


We adopt the dimensionless scaling
\begin{equation*}
\frac{r_0}{L_0}=\sqrt{\varepsilon},
\qquad
\frac{\sigma}{L_0}=\varepsilon,
\qquad
\frac{l_{\rm c}}{L_0}=\varepsilon^\gamma,
\end{equation*}
which is consistent with \eqref{eq:paraxial} and \eqref{eq:sl}.
Moreover, we introduce dimensionless variables adapted to the paraxial scaling regime by setting
\begin{equation*}
\frac{\bx}{L_0}=\sqrt{\varepsilon}\bx',
\qquad
\frac{z}{L_0}=z',
\qquad
\frac{\omega_{\rm o}L_0}{c_0}=\frac{\omega_{\rm o}'}{\varepsilon},  \qquad
\frac{tL_0}{c_0}=t',
\qquad
\frac{c_i}{c_0}=c_i',
\quad i=0,1.
\end{equation*}
The transverse coordinates are therefore scaled on the beam width scale $\sqrt{\varepsilon}L_0$, while the longitudinal coordinate is scaled on the propagation distance $L_0$. This choice reflects the paraxial geometry in which transverse variations occur on a much smaller scale than longitudinal propagation range. In the following, we drop the primes for notational simplicity and retain the same notation for the corresponding dimensionless variables and parameters. (For notational convenience  we retain the notation $c_0$ despite this parameter being unity
in the dimensionless coordinates.)

\subsection*{The transformed equations}

In view of the time harmonic source term in \eqref{eq:wave_equation}   we make the ansatz  in the dimensionless 
 coordinates
 \begin{equation*}
u(t,\bx,z) =    e^{-i\omega_{\rm o} t/\e}  \,  \check u   (\bx,z)  , 
 \end{equation*}
 and seek a characterization of $ \check u$.   
 From the wave equation \eqref{eq:wave_equation},  we get  the following Helmholtz equation
 in the dimensionless coordinates
 \begin{equation}\label{eq:wave2}
(\partial^2_{zz}  + \e ^{-1} \Delta_{\perp})\check u   (\bx,z) + \frac{\omega^2_{\rm o}}{\e^2 c^2(\bx,z)} \check u   (\bx,z) =  
\Psi(\bx)\delta'(z).
\end{equation}
The dependence on the small parameter $\e$ reflects the high-frequency and paraxial scaling introduced above.
The velocity field $c$ in the dimensionless formulation reads
 \begin{equation*}
c(\bx,z):= \left\{ \begin{array}{ccl} 
c_0 & \text{ if }  &  z<z_{\rm int}(\bx), \\
c_1 & \text{ if }  & z>z_{\rm int}(\bx) .\\
\end{array} \right.
\end{equation*}
for 
 \begin{equation*}
 z_{\rm int}(\bx) = z_{\rm int}
+
\e
V\left(
\frac{\bx}{\e^{\gamma-1/2}}
\right).
\end{equation*}
The dimensionless quantities $z_{\rm int}$ and $z_{\rm obs}$ 
(distances to rough interface and observation plane) are both of order one.

To analyze propagation along the distinguished direction $z$, we Fourier transform with respect to the transverse variables and define
\begin{equation}\label{def:FourierNew}
\hat f(\bk)
:=
\int_{\R^2} 
f(\bx)e^{-i\bk\cdot\bx} d\bx.
\end{equation}
We denote by $\hat u(\bk,z)$ the Fourier transform of $\check u(\bx,z)$ in the transverse variables.
The wave equation \eqref{eq:wave2}, restricted to $z \in (-\infty, z_{\rm int}(\bx)  )$, reduces  then to the 
one-dimensional Helmholtz equation
\begin{equation}\label{eq:Helmholtz}
\partial^2_z \hp(\bk,z) + \frac{\omega_{\rm o}^2}{\e^{2}c_0^2}\Big(1-\e \Big(\frac{c_0}{\omega_{\rm o}}\Big)^{2} |\bk|^2\Big)\hp(\bk,z)=  
\hat{\Psi}\bk)\delta'(z),
\end{equation}
while  for $z \in (z_{\rm int}(\bx),  \infty )$  we get
\begin{equation}\label{eq:Helmholtz1}
\partial^2_{z} \hp(\bk,z) + \frac{\omega_{\rm o}^2}{\e^{2}c_1^2}\Big(1-\e \Big(\frac{c_1}{\omega_{\rm o}}\Big)^{2} |\bk|^2\Big)\hp(\bk,z)= 0.
\end{equation}
 These equations describe the propagation of the transverse Fourier modes through the two homogeneous half-spaces separated by the random rough interface.

\subsection*{Radiation from the source}
The source in   \eqref{eq:wave2} gives wave components  emanating from the source and
propagating in respectively the right- and left-directions.
 Continuity relations across the source plane gives from \eqref{eq:wave2} the following  initial condition 
 for the wave field propagating into the half-space $z>0$  
\begin{equation}\label{eq:IC}
 \hat u_{\rm o}(\bk)  =  \frac{1}{2}\hat\Psi(\bk).
\end{equation}
The derivation is summarized in Appendix~\ref{app:jump}; see also \citep[Section~3.3]{fouque}.

\subsection*{Propagation through the homogeneous medium}

We denote  the vertical slownesses  associated with \eqref{eq:Helmholtz} and \eqref{eq:Helmholtz1} by
 \begin{multline}\label{def_lambda}
  \bcals_j^\e(\bk) := \frac{1}{c_j}
    \sqrt{1 - \e\!\left(\frac{c_j}{\omega_{\rm o}}\right)^{\!2}|\bk|^2}
  = \bcals_j\sqrt{1 - \e\!\left(\frac{c_j}{\omega_{\rm o}}\right)^{\!2}|\bk|^2},\\
  \text{for}\quad \se\,c_j|\bk| < 1,\quad j=0,1.
\end{multline}
where 
\[
\bcals_j:=c_j^{-1} , 
\]
 denotes the background slowness in the homogeneous medium.
The condition in \eqref{def_lambda} ensures that the longitudinal wavenumber remains real and therefore corresponds to propagating modes. When this condition fails, the longitudinal wavenumber becomes purely imaginary and the associated Fourier components are evanescent, decaying exponentially in the propagation direction. Since the observation plane is located at a macroscopic distance from the interface, such modes do not contribute appreciably to the recorded field in the scaling regime considered here. We therefore restrict attention to propagating modes only.
In the forthcoming analysis, the following expansion  will be used:
\begin{equation}\label{eq:exp_lambda}
\frac{1}{\e}\bcals_j^\e(\bk) = \frac{\bcals_j}{\e} -  \frac{c_j}{2  { \omega^2_{\rm o}} }| \bk |^2  + \CalO(\se)\qquad j=0,1,
\end{equation}
where the remainder term is uniform for $\bk$ in the essential support of $\hat\Psi$.

We then  get  for the field radiating from the source and incoming on
the rough interface for $z<\inf_{\bx} z_{\rm int}(\bx)$
 \begin{equation*}
 \hp_{\rm inc}(\bk,z)  :=
\exp\Big(i\frac{\omega_{\rm o}\bcals_0 z}{\e}\Big)
\exp\Big(-i \frac{c_0|\bk|^2 z}{2\omega_{\rm o}}\Big)
\hat u_{\rm o}(\bk),
\end{equation*}
where the first exponential factor represents the rapidly oscillatory carrier wave, while the second factor describes the slower paraxial diffraction of the beam.   
  
The corresponding propagation kernel in physical space is the paraxial Green's function
{\begin{align}\label{eq:par1}
G_j(\omega_{\rm o},\bx,z)
& :=
\frac{1}{(2\pi)^{2}}
\int 
e^{i\bx\cdot\bk}
e^{-i \frac{c_j|\bk|^2 z}{2\omega_{\rm o}}}
d\bk
 =
\frac{\omega_{\rm o}}{2\pi i z c_j}
\exp\left(
i\frac{\omega_{\rm o}|\bx|^2}{2zc_j}
\right),\qquad j=0,1.
\end{align}}
This kernel is the fundamental solution of the paraxial wave equation and characterizes the diffraction-induced spreading of the beam during propagation through the homogeneous medium. 
 
 \subsection*{Transmission through the rough interface}

We next seek a characterization of the scattering  at the rough interface and of the wave field that transmits through it and toward 
the observation plane  $z =  z_{\rm obs} > z_{\rm int}$.  In view of   Assumption
 \ref{ass1} we have assumed a smooth interface with a well defined normal and the wave equation \eqref{eq:wave_equation} yields the following two continuity relations across the randomly perturbed interface:
\begin{equation}\label{eq:continuity_relation}
\check{u}(z=z_{\rm int}(\bx)^+) = \check{u}(z=z_{\rm int}(\bx)^-)\quad\text{and}\quad \partial_z 
\check{u}(z=z_{\rm int}(\bx)^+) = \partial_z \check{u}(z=z_{\rm int}(\bx)^-)  .
\end{equation} 
A discussion of these continuity relations in the present setting is given in Appendix~\ref{app:scatt}.
Our objective is to relate the incoming field at the  {left} of  the interface to the transmitted field to the  {right} of it. 
To this end, we decompose the wave field into forward- and backward-propagating components and use the transmission conditions \eqref{eq:continuity_relation} to derive the corresponding coupling relations across the rough interface. The derivation follows the general framework developed in \cite{gomezsolna26}; for completeness, the main steps relevant to the present scaling regime are summarized in Appendix~\ref{app:iscatt} and we discuss next the 
result that follows from this analysis. 

Let
\[
z^- :=\inf_{\bx} z_{\rm int}(\bx),
\qquad
z^+ :=\sup_{\bx} z_{\rm int}(\bx),
\]
denote planes located respectively  {on the left} and  {on the right of} the rough interface. We further introduce the effective transmission coefficient associated with the corresponding flat interface,
\begin{equation}\label{def}
\CalT
:=
\frac{2\bcals_0}{\bcals_0+\bcals_1} = \frac{2c_0}{c_0+c_1},
\end{equation}
where $\bcals_j=c_j^{-1}$, $j=0,1$, are the background slownesses in the two homogeneous half-spaces.

At leading order in the paraxial scaling regime, the transmitted field to the left of  the interface is then given by
\begin{equation}\label{eq:trans}
\hat u(\bk,z^+)
\simeq
\CalT
e^{i\omega_{\rm o}
\left[
\bcals_0(z_{\rm int}-z^-)
+
\bcals_1(z^+-z_{\rm int})
\right]/\e}
\int 
\mathfrak{K}^\e\left(
\omega_{\rm o}(\bcals_0-\bcals_1),
\bk,\bk'
\right)
\hat u_{\rm inc}(\bk',z^-) d\bk', 
\end{equation}
where we introduced the scattering operator
\begin{equation}\label{eq:K}
\mathfrak{K}^\e (\tau,\bk,\bk') := \ \frac{1}{(2\pi)^2}
\int 
e^{i(\bk'-\bk)\cdot\bx}
e^{i\tau
V\left(
\bx/\e^{\gamma-1/2}
\right)}
 d\bx.
\end{equation}
The oscillatory factor in \eqref{eq:K} shows that the random interface redistributes energy among the transverse Fourier modes. The statistical properties of the transmitted field are therefore determined by the interplay between the roughness scaling, encoded through the parameter $\gamma$, and the fluctuations of the random field $V$.

\subsection*{The wave field in the observation plane}

We introduce the notation
\begin{equation}\label{eq:hG}
\hat G_j(\omega,\bk,z)
:=
\exp\left(
-i\frac{z c_j |\bk|^2}{2\omega}
\right),
\qquad j=0,1,
\end{equation}
for the Fourier representation of the paraxial propagator at frequency $\omega$ in the two homogeneous half-spaces separated by the rough interface. Equivalently, $\hat G_j$ is the Fourier transform in the transverse variables of the semigroup associated with the paraxial Schr\"odinger equation governing beam propagation in the homogeneous medium.

Combining the propagation through the lower homogeneous medium, the scattering relation \eqref{eq:trans}, and the subsequent propagation to the observation plane, we obtain the following leading-order expression for the transmitted field at
$
z=z_{\rm obs}:
$
\begin{align}\nonumber 
 \check u^\e(\bx,z_{\rm obs}) &
 \simeq
\frac{\CalT}{(2\pi)^2}
e^{i\omega_{\rm o}
\left[
\bcals_0 z_{\rm int}
+
\bcals_1(z_{\rm obs}-z_{\rm int})
\right]/\e}
\iint 
e^{i\bk\cdot\bx}
\hat G_1(\omega_{\rm o},\bk,z_{\rm obs}-z_{\rm int})
  \\
&\quad \times
\mathfrak K^\e\left(
\omega_{\rm o}(\bcals_0-\bcals_1),
\bk,\bk'
\right)
\hat G_0(\omega_{\rm o},\bk',z_{\rm int})
\hat u_{\rm o}(\bk')
 d\bk' d\bk, \label{eq:wobs}
\end{align}
where 
the transmission coefficient $\CalT$ is given by \eqref{def}, the scattering operator $\mathfrak K^\e$ is defined in \eqref{eq:K}, and the incoming source field is given by \eqref{eq:IC}.

Expression \eqref{eq:wobs} has a natural interpretation: the source field first propagates paraxially from the source plane to the rough interface through the medium with wave speed $c_0$, is subsequently scattered by the random interface through the mode-coupling operator $\mathfrak K^\e$, and finally propagates to the observation plane through the upper homogeneous medium with wave speed $c_1$.

The phase factor
\[
e^{i\omega_{\rm o}
\left[
\bcals_0 z_{\rm int}
+
\bcals_1(z_{\rm obs}-z_{\rm int})
\right]/\e}
\]
is spatially constant on the observation plane and  does not influence intensity-based observables. Similarly, the coefficient $\CalT$ only produces a deterministic amplitude scaling. For this reason, it is convenient to introduce the phase- and transmission-compensated field
\begin{equation*}
\frac{2}{\CalT}
e^{-i\omega_{\rm o}
\left[
\bcals_0 z_{\rm int}
+
\bcals_1(z_{\rm obs}-z_{\rm int})
\right]/\e}
\check u^\e(\bx,z_{\rm obs}),
\end{equation*}
and define the corresponding effective observation model
\begin{align}\nonumber
U_{\rm obs}^\e(\bx)
&=
\frac{1}{(2\pi)^2}
 \iint 
e^{i\bk\cdot\bx}
\mathfrak K^\e\left(
\omega_{\rm o}(\bcals_0-\bcals_1),
\bk,\bk'
\right) \\
 &\qquad \qquad \times
\hat G_1(\omega_{\rm o},\bk,z_{\rm obs}-z_{\rm int})
\hat G_0(\omega_{\rm o},\bk',z_{\rm int})
\hat\Psi(\bk')
 d\bk' d\bk. \label{eq:wobs2b}
\end{align}
This representation forms the basis for the subsequent analysis of imaging resolution and statistical stability in the presence of the random rough interface.  
    
\section{Matched Field  Imaging }\label{sec:match}

We first consider imaging based on the full transmitted wave field.
Our objective is to reconstruct the source profile $\Psi$ from measurements of the transmitted field recorded in the observation plane $z=z_{\rm obs}$.
Throughout this section, we assume that the background parameters
$c_j$, $j=0,1$, together with the interface location $z_{\rm int}$ and the observation distance $z_{\rm obs}$, 
are known and available for constructing the imaging functional.
We also assume that the receiver aperture is sufficiently large to capture the transmitted wave field without truncation effects.

In order to identify the matched field imaging functional assume a point source at $\by$ so that 
$\Psi(\bx)  =\delta(\bx-\by)$, moreover,  a flat interface so that 
\[
\mathfrak{K}^\e({\omega_{\rm o}}(\bcals_0-\bcals_1),  \bk, \bk') =\delta(\bk'-\bk).
\]
The corresponding transmitted field at the observation plane is then
\begin{align}\label{eq:wobs3}
U_{\rm mf}(\bx;\by)
&=
\frac{1}{(2\pi)^2}
\int 
\hat G_1(\omega_{\rm o},\bk,z_{\rm obs}-z_{\rm int})
\hat G_0(\omega_{\rm o},\bk,z_{\rm int})
e^{i\bk\cdot(\bx-\by)}
d\bk,\nonumber \\
&=
\int 
G_1(\omega_{\rm o},\bx-\bx',z_{\rm obs}-z_{\rm int})
G_0(\omega_{\rm o},\bx'-\by,z_{\rm int})
d\bx'.
\end{align}
This field serves as the reference template in the matched field imaging procedure
defined by
\begin{align*}
\CalI_{\rm mf}(\by)  =
\int 
U_{\rm obs}^{\e}(\bx)
\overline{U_{\rm mf}(\bx;\by)}
 d\bx ,
\end{align*}
where the overline denotes complex conjugation.

To characterize the effect of the random interface, we introduce the characteristic function
\begin{equation}\label{def:charac_V}
\phi_V(\tau)
:=
\E\big[
e^{i\tau V({\bf 0})}
\big],
\end{equation}
associated with the interface fluctuation process $V$.

The following result describes the mean matched field image.
 \begin{proposition}\label{prop1}
 \begin{equation*}
\E\left[ \CalI_{\rm mf}(\by) \right]  = \phi_V(\omega_{\rm o}(\bcals_0 - \bcals_1) ) \Psi(\by).
\end{equation*}
\end{proposition}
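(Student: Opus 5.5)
The plan is to compute $\E[\mathfrak K^\e]$ first, since the scattering operator is the only random object in $U_{\rm obs}^\e$. We then exploit the fact that the matched field template is exactly the flat-interface propagator. Its unitarity (the Fourier multipliers $\hat G_j$ have modulus one) collapses the imaging functional back onto $\Psi$.

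\textbf{Step 1: mean of the scattering operator.} Take expectations inside the integral \eqref{eq:K}; this is justified in the distributional sense, since $V$ is bounded and the integrand is bounded by one. By stationarity of $V$ (Assumption~\ref{ass1}), $\E[e^{i\tau V(\bx/\e^{\gamma-1/2})}]=\E[e^{i\tau V({\bf 0})}]=\phi_V(\tau)$ for every $\bx$, by \eqref{def:charac_V}. Hence
\[
\E\big[\mathfrak K^\e(\tau,\bk,\bk')\big]=\frac{\phi_V(\tau)}{(2\pi)^2}\int e^{i(\bk'-\bk)\cdot\bx}d\bx=\phi_V(\tau)\,\delta(\bk-\bk').
\]
Note that this is exact for every $\e$ and $\gamma$.

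\textbf{Step 2: mean observed field.} Insert this into \eqref{eq:wobs2b}, exchanging expectation and the $d\bk\,d\bk'$ integrals (the $\hat G_j$ are unimodular and $\hat\Psi$ is rapidly decaying). This gives
\[
\E[U_{\rm obs}^\e(\bx)]=\phi_V\big(\omega_{\rm o}(\bcals_0-\bcals_1)\big)\frac{1}{(2\pi)^2}\int e^{i\bk\cdot\bx}\hat G_1(\omega_{\rm o},\bk,z_{\rm obs}-z_{\rm int})\hat G_0(\omega_{\rm o},\bk,z_{\rm int})\hat\Psi(\bk)\,d\bk .
\]
In other words, the mean field is $\phi_V$ times the flat-interface field. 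Since $U_{\rm mf}(\cdot;\by)$ is deterministic, $\E[\CalI_{\rm mf}(\by)]=\int \E[U_{\rm obs}^\e(\bx)]\overline{U_{\rm mf}(\bx;\by)}\,d\bx$.

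\textbf{Step 3: Parseval.} From \eqref{eq:wobs3}, the transverse Fourier transform of $U_{\rm mf}(\cdot;\by)$ is $\hat G_1\hat G_0 e^{-i\bk\cdot\by}$. Applying Plancherel with the convention \eqref{def:FourierNew}, $\int f\bar g\,d\bx=(2\pi)^{-2}\int\hat f\overline{\hat g}\,d\bk$, we obtain
\[
\E[\CalI_{\rm mf}(\by)]=\frac{\phi_V}{(2\pi)^2}\int |\hat G_1|^2|\hat G_0|^2\hat\Psi(\bk)e^{i\bk\cdot\by}\,d\bk=\phi_V\big(\omega_{\rm o}(\bcals_0-\bcals_1)\big)\Psi(\by).
\]
Here we used $|\hat G_j|=1$ from \eqref{eq:hG} and Fourier inversion.

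The only delicate point is the interchange of expectation and the oscillatory integrals in Step 1, where $\mathfrak K^\e$ is only a distribution in $(\bk,\bk')$. To handle it, I would first test against the smooth, rapidly decaying $\hat G_0\hat\Psi$ and pair with the template. This turns the computation into an absolutely convergent integral in $\bx$ of the bounded random function $e^{i\tau V}$ times deterministic integrable functions, after which Fubini applies. The remaining steps are routine.
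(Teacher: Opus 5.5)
Your argument is correct and follows the paper's proof. As in the paper, stationarity gives $\E[\mathfrak K^\e]=\phi_V\,\delta(\bk-\bk')$, so the mean field is $\phi_V$ times the flat-interface field, and the matched-field template then collapses this to $\phi_V\Psi(\by)$ because the paraxial propagators are unitary. The only difference is cosmetic: you do the final step in the Fourier domain via Plancherel and $|\hat G_j|=1$, whereas the paper returns to physical space and applies the time-reversal refocusing identity \eqref{eq:tr} twice, which expresses the same fact.
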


\begin{discussion}
In the absence of interface fluctuations, that is when the interface is flat and $V\equiv 0$, we have
\[
\phi_V(\tau)\equiv 1,
\]
and therefore
\[
\CalI_{\rm mf}(\by)=\Psi(\by).
\]
Thus, under ideal homogeneous propagation conditions, the matched field functional reconstructs the source profile exactly.

We introduce the differential  wave number associated with the
medium contrast
\begin{align}\label{eq:dkdef}
   \Delta k & =  \omega_{\rm o}|\bcals_0 - \bcals_1|  = \frac{\omega_{\rm o}}{c_0}\Big\vert 1 - \frac{c_0}{c_1} \Big\vert  
\end{align}
and  observe that the mean image is significantly attenuated when
\[
\sigma_V \Delta k \gtrsim 1,
\]
where $\sigma_V$ denotes the characteristic amplitude of the (rescaled) interface fluctuations.
This regime corresponds  to large interface fluctuations and or to strong contrast between the two media.
In the special case when $V$ has the Gaussian distribution we have  explicitly 
\[
    \phi_V(\omega_{\rm o}(\bcals_0 - \bcals_1) )   = \exp\left(  - \frac{(\sigma_V \Delta k )^2}{2}   \right) . 
\]

An important observation is that the mean imaging functional does not depend on the interface location $z_{\rm int}$.
Consequently, at the level of the mean matched field image, no shower curtain effect is observed.
Likewise, the spatial correlation length of the interface fluctuations does not influence the mean image.
As shown below in Proposition~\ref{prop2}, these effects instead enter through the variance and statistical stability properties of the imaging functional.
\end{discussion}

\begin{proof}
We use the shorthand notation 
\begin{align}
\hat{G}_0(\bk)    = \hat{G}_0(\omega_{\rm o},\bk,z_{\rm int}) , \quad 
\hat{G}_1(\bk)   = \hat{G}_1(\omega_{\rm o},\bk,z_{ \rm obs}-z_{\rm int}) , \label{eq:sh}
\end{align}
 then in view of  \eqref{eq:wobs2b}  and using the notation   \eqref{eq:dkdef} the mean recorded field then reads 
\[
\E\left[ U_{\rm obs}^{\e}(\bx )  \right]   = 
 \frac{1}{(2\pi)^2  }
    \iint    e^{i \bk \cdot \bx } 
   \E\left[\mathfrak{K}^\e(\Delta k,  \bk, \bk')  \right]
\hat{G}_1( \bk ) \hat{G}_0( \bk' )
\hat\Psi(\bk') d\bk' d\bk  .
\]
Due to the stationarity of the fluctuations $V$, we  have  in the sense of distributions 
\[
\E\left[\mathfrak{K}^\e(\Delta k,  \bk, \bk')  \right] = 
\phi_V\big(
\Delta k
\big)
\delta(\bk-\bk').
\]
and hence 
\begin{align*}
\E \left[U_{\rm obs}^{\e}(\bx)\right]
&=
\phi_V\big(
\Delta k
\big)
\frac{1}{(2\pi)^2}
\int 
e^{i\bk\cdot\bx}
\hat G_1(  \bk )
  \hat G_0( \bk )
\hat\Psi(\bk)
d\bk .
\end{align*}
Transforming back to the spatial domain yields
\begin{align}\label{eq:meani}
\E\left[U_{\rm obs}^{\e}(\bx)\right]
&=
\phi_V\big(
\Delta k
\big)
\iint 
G_1( \bx-\bx' )
 G_0( \bx'-\by )
\Psi(\by)
d\by d\bx'.
\end{align}

Note next  the time-reversal refocussing (reciprocity) relation 
\begin{align}\label{eq:tr}
  \int  G_j( \bx-\bx')\,
    & \overline{G_j( \bx'-\by )}\,d\bx' \nonumber\\
  &= \frac{k_j^2}{(2\pi z)^2}
    \int  \exp\!\left(
      \frac{ik_j}{2z}\bigl(|\bx-\bx'|^2 - |\bx'-\by|^2\bigr)
    \right)d\bx' \nonumber\\
  & = \frac{k_j^2}{(2\pi z)^2}
    \int  \exp\!\left(
      \frac{ik_j}{z}\bigl(|\bx|^2 - |\by|^2 + \bx'\cdot(\by-\bx)\bigr)
    \right)d\bx' 
    \nonumber\\
  & 
    = 
\delta(\bx-\by),
\end{align}
for $j=0,1$. Using  the expression for $\E[U_{\rm obs}^{\e}]$ in  \eqref{eq:meani} and applying \eqref{eq:tr} twice gives

\begin{equation}\label{eq:wobs4}
\E\left[\CalI_{\rm mf}(\by)\right]  =
\int 
\E\left[ U_{\rm obs}^{\e}(\bx) \right]
\overline{U_{\rm mf}(\bx;\by)}
 d\bx  
 =
\phi_V(
\Delta k
)
\Psi(\by),
\end{equation}
which completes the proof.
\end{proof}

To assess the statistical stability of the matched field imaging functional, we now analyze its variance.
While Proposition~\ref{prop1} shows that the mean image is simply a damped version of the source profile, the variance reveals how fluctuations of the rough interface affect the variations  of the image from one realization of the interface to another.
In order to state the result it will be convenient to   introduce  the characteristic function for the differential form of the rough
interface fluctuations
 \begin{equation*}
\phi_{\Delta V}(\tau, \bx) :=
\E\big[
e^{i\tau (V(\bx+\bx')-V(\bx'))}
\big].
\end{equation*}
By stationarity and isotropy of the random field $V$, the function
$\phi_{\Delta V}$  depends only on $|\bx|$ and is real-valued and even in $\bx$.

The following proposition identifies three qualitatively different scattering regimes according to the value of the scaling exponent $\gamma$.
These correspond respectively to critically scaled ($\gamma=1/2$), slowly varying ($\gamma<1/2$), and rapidly varying ($\gamma>1/2$) interface fluctuations relative to the transverse beam width.

\begin{proposition}\label{prop2}
The variance of the image is :
\begin{itemize}
\item for $\gamma=1/2$
\begin{equation}\label{eq:v1}
 {\rm Var}\bigl(\CalI_{\rm mf}(\by)\bigr) = \int\kappa(\Delta k,\bk)
    \left|\Psi\!\Big(\by + \frac{c_0 z_{\rm int}}{\omega_{\rm o}}\bk\Big)\right|^2
    d\bk - \bigl|\phi_V(\Delta k)\,\Psi(\by)\bigr|^2;
\end{equation}
\item for $\gamma<1/2$ and $\e\to 0$
\[
   {\rm Var}\bigl(\CalI_{\rm mf}(\by)\bigr)
  = |\Psi(\by)|^2\bigl(1-\bigl|\phi_V(\Delta k)\bigr|^2\bigr);
\]
\item  for $\gamma > 1/2$ and $\e\to 0$
  \[ 
  {\rm Var}\bigl(\CalI_{\rm mf}(\by)\bigr) = 0,
\] 
 \end{itemize}
with $\Delta k$ defined in \eqref{eq:dkdef} 
and the scattering kernel $\kappa$ defined
by
 \begin{align}\label{eq:kapdef}
   \kappa(\tau,\bk) =   \frac{1}{(2\pi)^2} \int    {e^{-i \br\cdot \bk}} \phi_{\Delta V}(\tau, \br)
   d\br    
   .
\end{align}
 \end{proposition}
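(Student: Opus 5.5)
The plan is to compute the second moment $\E\bigl[|\CalI_{\rm mf}(\by)|^2\bigr]$ directly and subtract $|\E[\CalI_{\rm mf}(\by)]|^2 = |\phi_V(\Delta k)\Psi(\by)|^2$, which is known from Proposition~\ref{prop1}.

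\textbf{Step 1: a single-integral representation of the image.} Using \eqref{eq:wobs2b}, \eqref{eq:wobs3} and the shorthand \eqref{eq:sh}, I would first write $\CalI_{\rm mf}(\by)$ in the Fourier domain. The $\bx$-integration produces a delta in $\bk$. Since $|\hat G_1(\bk)|=1$, the factor $\hat G_1$ cancels against its conjugate, and the image does not depend on $z_{\rm obs}$. Next I substitute the definition \eqref{eq:K} of $\mathfrak K^\e$ and carry out the $\bk$ and $\bk'$ integrations, which are inverse Fourier transforms via \eqref{eq:par1}. This should give
\[
\CalI_{\rm mf}(\by)=\int \overline{G_0(\bx-\by)}\,e^{i\Delta k\, V(\bx/\e^{\gamma-1/2})}\,(G_0*\Psi)(\bx)\,d\bx ,
\]
with $G_0$ evaluated at range $z_{\rm int}$. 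The sign of $\Delta k$ is immaterial, since only even quantities will appear. The interpretation is: propagate the source to the interface, multiply by the random phase screen, and back-propagate.

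\textbf{Step 2: the second moment.} Taking the second moment gives
\[
\E\bigl[|\CalI_{\rm mf}(\by)|^2\bigr]=\iint \overline{G_0(\bx_1-\by)}G_0(\bx_2-\by)\,\phi_{\Delta V}\Big(\Delta k,\frac{\bx_1-\bx_2}{\e^{\gamma-1/2}}\Big)(G_0*\Psi)(\bx_1)\overline{(G_0*\Psi)(\bx_2)}\,d\bx_1d\bx_2 .
\]
The three regimes are then treated separately.

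\emph{Case $\gamma=1/2$.} The rescaling is trivial. I write $\phi_{\Delta V}(\Delta k,\br)=\int\kappa(\Delta k,\bk)e^{i\br\cdot\bk}d\bk$, which is the inverse of \eqref{eq:kapdef}. The double integral then factorizes as
\[
\int\kappa(\Delta k,\bk)\,\Bigl|\int\overline{G_0(\bx-\by)}e^{i\bk\cdot\bx}(G_0*\Psi)(\bx)d\bx\Bigr|^2d\bk .
\]
The inner integral is evaluated exactly as in the reciprocity computation \eqref{eq:tr}, now with an extra plane-wave factor. Completing the square, the plane wave shifts the refocusing delta to $\by'=\by+\frac{c_0z_{\rm int}}{\omega_{\rm o}}\bk$, multiplied by a unimodular phase. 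Hence the modulus squared is $|\Psi(\by+c_0z_{\rm int}\bk/\omega_{\rm o})|^2$, and subtracting the squared mean yields \eqref{eq:v1}.

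\emph{Case $\gamma<1/2$.} The argument of $\phi_{\Delta V}$ is $\e^{1/2-\gamma}(\bx_1-\bx_2)\to0$, so $\phi_{\Delta V}\to1$. The double integral then factorizes into $|\int\overline{G_0(\bx-\by)}(G_0*\Psi)(\bx)d\bx|^2=|\Psi(\by)|^2$ by \eqref{eq:tr}, which gives $|\Psi(\by)|^2(1-|\phi_V(\Delta k)|^2)$.

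\emph{Case $\gamma>1/2$.} The argument of $\phi_{\Delta V}$ diverges for $\bx_1\neq\bx_2$. By the rapid mixing in Assumption~\ref{ass1}, $V(\bx_1)$ and $V(\bx_2)$ decorrelate, so $\phi_{\Delta V}\to|\phi_V(\Delta k)|^2$. The second moment therefore converges to $|\phi_V(\Delta k)\Psi(\by)|^2$, and the variance tends to $0$. Equivalently, in the $\kappa$-representation, $\kappa$ splits into a mass $|\phi_V|^2\delta(\bk)$ plus a part that is spread to $|\bk|\sim\e^{1/2-\gamma}\to\infty$, where $\Psi$ vanishes.

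\textbf{Main obstacle.} The main difficulty is justifying these limits. The kernels $G_0$ are only oscillatory, not integrable, so pointwise convergence of $\phi_{\Delta V}$ does not directly permit passage to the limit. I would therefore carry out the limits in the $\kappa$-representation, where the integrand is $\kappa(\Delta k,\bk)\,|\Psi(\cdot)|^2$ and is manifestly absolutely integrable. For the slowly varying case, the rescaled $\kappa$ concentrates at $\bk=0$ with total mass $\phi_{\Delta V}(0)=1$, and continuity of $\Psi$ finishes the argument. For the rapidly varying case, I would use the decay of $\Psi$ together with the integrability of $\phi_{\Delta V}-|\phi_V|^2$, which follows from mixing. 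A secondary point to check is the exact sign and size of the shift $c_0z_{\rm int}\bk/\omega_{\rm o}$ in the completed-square computation.
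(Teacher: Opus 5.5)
Your proposal is correct, but it organizes the computation differently from the paper. The paper stays in the transverse Fourier domain. It inserts the field cross-moment \eqref{eq:2ndmom} into \eqref{eq:mfd} and lets the $\hat G_1$ factors cancel, which gives the four-fold integral \eqref{eq:varE}--\eqref{eq:mf3}. When $\gamma=1/2$ it integrates out $\bk$ and $\br$. For $\gamma\neq 1/2$ it replaces $\phi_{\Delta V}(\Delta k,\br/\e^{\gamma-1/2})$ by its pointwise limit inside the $\br$-integral, which is not absolutely convergent.

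You instead reduce the image to a propagate--modulate--back-propagate integral in physical space. You then use the spectral representation of $\phi_{\Delta V}$ to write the second moment as a $\kappa$-average of $|J(\bk)|^2$. This is legitimate because $\phi_{\Delta V}$ is positive definite, being the covariance of the stationary field $e^{i\tau V}$, so $\kappa(\tau,\cdot)$ is a probability measure. Your shift $c_0z_{\rm int}\bk/\omega_{\rm o}$ is correct.

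What your route buys is an exact identity valid for every $\e$ and $\gamma$:
$\E|\CalI_{\rm mf}(\by)|^2=\int\kappa(\Delta k,\bk)\,|\Psi(\by+\e^{1/2-\gamma}\frac{c_0z_{\rm int}}{\omega_{\rm o}}\bk)|^2\,d\bk$.
All three regimes then become dominated-convergence statements. For $\gamma>1/2$ you only need two facts: $\kappa$ carries mass $|\phi_V(\Delta k)|^2$ at $\bk={\bf 0}$, which follows from mixing, and $\Psi$ decays. You do not need integrability of $\phi_{\Delta V}-|\phi_V|^2$. The paper's Fourier bookkeeping, in turn, produces \eqref{eq:varE}, which it reuses with a rescaled image coordinate for the optical mean image in Proposition~\ref{prop3}.

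The obstacle you flag is also milder than you fear. In your physical-space double integral, $G_0$ is bounded and $G_0*\Psi$ is a Schwartz function, since its transform is $\hat G_0\hat\Psi$. So the integrand is absolutely integrable, and dominated convergence already applies there directly.
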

 
 
 \begin{discussion}\,
 \begin{itemize}
 \item \emph{Critical regime ($\gamma=1/2$).} 
The interface fluctuations interact with the beam on the same transverse scale as the beam width. The resulting image fluctuations are described by the scattering kernel $\kappa$, which acts as an effective point-spread function. The image variance therefore reflects both the roughness statistics of the interface and the propagation distance $z_{\rm int}$. 
 This is precisely the regime in which the shower-curtain effect becomes observable.
  Note  that for a flat interface $\kappa(\tau,\bk) = \delta(\bk)$ and $\phi_V(\tau)\equiv 1$
 so that indeed ${\rm Var}(\CalI_{\rm mf}(\by)) = 0$. 
 
\item \emph{Large-scale interface fluctuations ($\gamma<1/2$).}  
  The rough interface fluctuations takes place on a scale that is large compared to the beam width,
  so over the beam footprint the random phase is approximately constant.  Thus, imaging of the source is 
 possible by considering the  magnitude of the imaging function.  The variance of the image function itself will be 
 of relative magnitude ${\mathcal O}(1)$ in general due to the phase factor.
   
\item  \emph{Small-scale interface fluctuations ($\gamma>1/2$).}
 The interface fluctuations happens on a scale that is fast relative to  the beam width so that there is an 
 `averaging  effect'  over the beam width regarding the impact of the rough interface.  From the point of view of the specular  cone then
  the rough interface can be modeled by a flat homogenized interface \cite{gomezsolna26}. This is the speckle regime and the transmitted random wave  fluctuations 
 will be small,  but supported in a relatively  wide cone, the  incoherent  image fluctuations will accordingly  be small.
 However, if   $\sigma_V  \Delta k \gtrsim 1 $ the image will be strongly damped and hence sensitive 
 to additive measurement noise. 
\end{itemize}
   
   \end{discussion} 
\begin{proof}
Consider  the second-order moment of the imaging function 
\begin{equation}\label{eq:mfd}
\E\left[ |\CalI_{\rm mf}(\by)|^2 \right]    =  \iint \E\big[ U_{\rm obs}^{\e}(\bx ) \overline{U_{\rm obs}^{\e}(\bx' ) } \big]  
\overline{U_{\rm mf}(\bx; \by)}   U_{\rm mf}(\bx'; \by)  d\bx d\bx' .
\end{equation} 
Observe  that  
\begin{align*}
 \E\left[\mathfrak{K}^\e(\tau,  \bk, \bk') \overline{\mathfrak{K}^\e(\tau,  \bq, \bq')} \right] 
 = \frac{\delta((\bk'-\bq')-(\bk-\bq))}{(2\pi)^2} \iint  {e^{-i(\bq-\bq')\cdot \br}}  { \phi_{\Delta V}\left(\tau, \frac{\br}{\e^{\gamma-1/2}}\right) } d\br.
\end{align*}
For the cross-moment of the observations we then have from \eqref{eq:wobs2b}, and using again the notation \eqref{eq:sh}
 \begin{align}\label{eq:2ndmom}
  \E\left[U_{\rm obs}^{\e}(\bx)\,\overline{U_{\rm obs}^{\e}(\bx')}\right]
    = \frac{1}{(2\pi)^4}&\iiiint e^{i(\bk\cdot\bx - \bq\cdot\bx')}      \E\Bigl[
      \mathfrak{K}^\e(\Delta k,\bk,\bk')\,
      \overline{\mathfrak{K}^\e(\Delta k,\bq,\bq')}
    \Bigr] \nonumber\\
  &\times
    \hat{G}_1( \bk )\,
    \hat{G}_0( \bk' )      \overline{\hat{G}_1( \bq )}\,
    \overline{\hat{G}_0( \bq' )}\,
    \hat\Psi(\bk')\overline{\hat\Psi(\bq')}\,
    d\bk'd\bk\,d\bq'd\bq  \nonumber 
%
\\ 
 = \frac{1}{(2\pi)^6  } &
    \iiiint    e^{i (\bk \cdot \bx - \bq \cdot \bx') }  e^{-i  \br \cdot (\bq-\bq')      } 
    { \phi_{\Delta V}\left(\Delta k ; \frac{\br}{\e^{\gamma-1/2}} \right) } \nonumber 
   \\ & \hspace{1cm} \times 
\hat{G}_1( \bk ) \hat{G}_0( \bk-(\bq-\bq') )
\overline{\hat{G}_1( \bq )} \,  \overline{\hat{G}_0( \bq' )} \\
& \hspace{1cm} \times \hat\Psi(\bk - (\bq-\bq')) \overline{\hat\Psi(\bq')}   d\bk d\bq' d\bq d\br \nonumber.
\end{align}
Then in view of \eqref{eq:wobs3}  and   \eqref{eq:mfd},  and using \eqref{eq:hG} we have 
\begin{align}
\E\left[ |\CalI_{\rm mf}(\by)|^2 \right] 
=\frac{1}{(2\pi)^{10}  } & \int \dots\int    e^{i (\bk \cdot \bx - \bq \cdot \bx')} {e^{-i \br\cdot(\bq-\bq')}} \phi_{\Delta V}\Big(\Delta k ; \frac{\br}{\e^{\gamma-1/2}} \Big) 
 \nonumber   \\ & \hbox{} \times \hat{G}_1( \bk )  \hat{G}_0( \bk-(\bq-\bq') )
\overline{\hat{G}_1( \bq )} \,  \overline{\hat{G}_0( \bq' )} 
 \nonumber \\ & \hbox{} \times 
 \overline{\hat{G}_1( \bs )} \, \overline{ \hat{G}_0( \bs )}
  \hat{G}_1( \bs' )  \hat{G}_0( \bs') \, e^{-i \bs\cdot(\bx-\by)} \, e^{i \bs'\cdot(\bx'-\by)} \nonumber \\
&\times \hat\Psi(\bk-(\bq-\bq')) \overline{\hat\Psi(\bq')}  d\bk d\bq' d\bq d\br  d\bs d\bs' d\bx d\bx' \nonumber   \\ 
= \frac{1}{(2\pi)^6} & \iiiint e^{i \by \cdot (\bk - \bq) } { e^{-i \br\cdot(\bq-\bq')}} \phi_{\Delta V}\Big(\Delta k ; \frac{\br}{\e^{\gamma-1/2}} \Big)  \nonumber \\ 
& \hbox{} \times \hat{G}_0( \bk-(\bq-\bq') ) \overline{\hat{G}_0( \bq' )} \, \overline{ \hat{G}_0( \bk )} \hat{G}_0( \bq )  \label{eq:varE}   \\
& \times \hat\Psi(\bk-(\bq-\bq')) \overline{\hat\Psi(\bq')} d\bk d\bq' d\bq d\br \nonumber \\ 
= \frac{1}{(2\pi)^6} & \iiiint e^{i \by \cdot (\bk - \bq) }  {e^{-i \br\cdot(\bq-\bq')}} \phi_{\Delta V}\Big(\Delta k ; \frac{\br}{\e^{\gamma-1/2}} \Big)      
 \nonumber    \\ & \hbox{} \times 
   e^{-i \frac{ z_{\rm int} c_0}{\omega_{\rm o}}   (\bq - \bk)\cdot(\bq-\bq') }
  \hat\Psi(\bk - (\bq-\bq') ) \overline{\hat\Psi(\bq')}   d\bk d\bq' d\bq d\br.    \label{eq:mf3}
\end{align}

Consider first  the case $\gamma=1/2$. We  find after   integrating in $\bk$ and $\br$  
\begin{align*} 
\E\left[ |\CalI_{\rm mf}(\by)|^2 \right]    =  
 \frac{1}{(2\pi)^2}
    \iint  &  e^{- i \by \cdot \bq' } \kappa(\Delta k,  \bq-\bq') e^{-i \frac{ z_{\rm int} c_0}{\omega_{\rm o}}  \bq' \cdot (\bq-\bq' ) }     
 \nonumber    \\ 
 &  \times  \Psi\Big(\by +  \frac{ z_{\rm int} c_0}{\omega_{\rm o}} (\bq-\bq') \Big) \overline{\hat\Psi(\bq')} d\bq' d\bq.
\end{align*} 
Then   by denoting $\bk=\bq-\bq'$ and integrating in $\bq'$ we find 
\begin{align*} 
& \E\left[ |\CalI_{\rm mf}(\by)|^2 \right]    =  
     \iint        \kappa(\Delta k,  \bk ) 
     \left|  \Psi\Big(\by +  \frac{ z_{\rm int} c_0}{\omega_{\rm o}} \bk  \Big)  \right|^2
 d\bk     .
\end{align*} 
 Finally, after subtracting {$|\E[\CalI_{\rm mf}(\by)]|^2$, given by \eqref{eq:wobs4}}, we obtain \eqref{eq:v1}.

 Consider next the case $\gamma < 1/2$.
  We then have 
     \begin{align*} 
  \lim_{\e \to 0}  \phi_{\Delta V}\Big(\Delta k ; \frac{\br}{\e^{\gamma-1/2}} \Big) 
  =\phi_{\Delta V}(\Delta k ;  {\bf 0} )
  =  1 . 
    \end{align*}
    Integration in $\br$ next produces a $\delta(\bq-\bq')$ and we find subsequently 
   from  \eqref{eq:mf3} 
  \[
   \E\left[ |\CalI_{\rm mf}(\by)|^2 \right]    =  |\Psi(\by)|^2 ,
   \]
 and the result in this case follows. 
    
 Finally, consider    the case $\gamma > 1/2$.   In view of Assumption \ref{ass1} and the mixing property of  $V$ we then have 
for $\br \neq {\bf 0}$ 
   \begin{align*} 
  \lim_{\e \to 0} \phi_{\Delta V}\Big(\Delta k ; \frac{\br}{\e^{\gamma-1/2}} \Big) 
     =  |\phi_{ V}(\Delta k)|^2  . 
  \end{align*}
  Then we get   from  \eqref{eq:mf3}.  
  \[
   \E\left[ |\CalI_{\rm mf}(\by)|^2 \right]    =  |\phi_{ V}\left(\Delta k\right) \Psi(\by)|^2 = 
   \E^2\left[  \CalI_{\rm mf}(\by)  \right] ,
   \]
so that the variance goes to zero  in this limit.

     \end{proof}

\begin{remark}\label{remark1}
Under Assumption~\ref{ass1}, the covariance function of the interface fluctuations is
\[
C(|\bx|)
= \E\left[V(\bx')V(\bx'+\bx)\right].
\]
A natural measure of the local roughness of the interface is
\begin{equation}\label{eq:Ddef}
\CalD
:=
-\left.\frac{d^2}{dx^2}C(x)\right|_{x=0},
\end{equation}
which quantifies the typical squared slope of the interface fluctuations.
Note that $\CalD$ can be interpreted as the mean square slope of the interface 
fluctuations, and thus a measure of their roughness.  

We introduce next the following blurring parameter
\begin{equation}\label{def:Bm}
    \beta_{\rm int}   = z_{\rm int}
\left|1-\frac{c_0}{c_1}\right|
\sqrt{\CalD}.
\end{equation}
This parameter plays the role of an effective blurring radius associated with the rough interface. It increases linearly with the distance from the source plane to the interface and with the roughness level of the interface. As will be seen both here and in the optical-imaging setting considered later, $\beta_{\rm int}$  is the fundamental quantity controlling the broadening of incoherent image fluctuations and therefore provides a quantitative measure of the shower-curtain effect.
\end{remark}

\begin{example}\label{ex1}
We illustrate Proposition~\ref{prop2} in the critical regime
$
\gamma=1/2,
$
where the correlation length of the interface fluctuations is comparable to the beam width.

For analytical convenience, we model the interface fluctuations by a stationary Gaussian random field and choose a Gaussian source profile,
\begin{align}\label{eq:defSV}
V(\bx)
&\sim {\mathcal N}(0,\sigma_V^2), \nonumber \\
C(|\bx|)
&= \E[V(\bx')V(\bx'+\bx)]
= \sigma_V^2
\rho_V\left(\frac{|\bx|}{\ell_V}\right),
\nonumber \\
\Psi(\bx)
&= \sigma_\Psi
\exp\left(
-\frac{|\bx|^2}{{2}\ell_\Psi^2}
\right).
\end{align}
Here $\ell_V$  denotes the correlation length of the interface fluctuations and $\ell_\Psi$  the beam width in the transverse scaling. We then obtain
 \begin{align*}
  \phi_{ V}\left(\Delta k \right) & =  \exp\left(
-\frac{(\sigma_V\Delta k)^2}{2} \right) 
 , \\
 \phi_{\Delta V}\left(\Delta k ;  {\bx}  \right) &=   \exp\left(
-(\sigma_V\Delta k)^2
\left[
1-\rho_V\left(\frac{|\bx|}{\ell_V}\right)
\right]
\right).
\end{align*}
Substituting these expressions into Proposition~\ref{prop2} yields
\begin{align}
{\rm Var}(\CalI_{\rm mf}(\by))
&=
\frac{1}{(2\pi)^2}
\iint 
 { e^{-i\bk\cdot\br}}
e^{-(\sigma_V\Delta k)^2}
\Bigl(
e^{(\sigma_V\Delta k)^2
\rho_V(|\br|/\ell_V)}
-1
\Bigr)
\nonumber \\
&\qquad\qquad\times
\left|
\Psi\left(
\by+\frac{c_0 z_{\rm int}}{\omega_o}\bk
\right)
\right|^2
d\bk d\br ,
\label{eq:varmf}
\end{align}
for $\Delta k = \omega_{\rm o}|\bcals_0 - \bcals_1| $ the differential wave number associated  with the medium contrast  
   introduced  in \eqref{eq:dkdef}.
    Note that   we have 
   \begin{equation*}
     \CalD  = - \frac{d^2}{d x^2 }  C(x) _{|x=0}    = D  \Big( \frac{\sigma_V}{\ell_V} \Big)^2. 
   \end{equation*} 
   where 
   \[
   D:=-\rho_V''(0)
   \]  is a dimensionless measure of the shape of the covariance function near the origin.
 The    blurring parameter introduced in Remark \ref{remark1}  
 can then be written as  
 \begin{equation}\label{eq:defB2}
  \beta_{\rm int}  = 
  z_{\rm int} \left\vert 1-\frac{c_0}{c_1} \right\vert \left( \frac{\sqrt{ D} \sigma_V}{\ell_V}\right)  . 
 \end{equation}

We consider next  the `strong scattering', 'high contrast', or `high-frequency'  regime 
\begin{equation}\label{eq:regime}
|\sigma_V \Delta k|  \gg 1
\end{equation}
for which 
\begin{equation*}
\omega_{\rm o} \left(  \frac{\ell_V\ell_\Psi }
     {c_0 z_{\rm int}} \right) 
 =
\gamma_*
\sigma_V\Delta k  = \omega_{\rm o} (\gamma_*
\sigma_V  |\bcals_0 - \bcals_1|) , 
\qquad
\gamma_*=O(1) .
\end{equation*}
  We then  get via a Laplace's Approximation type argument     
 \begin{align}
    {\rm Var}\left[  \CalI_{\rm mf}(\by)  \right]      & \simeq \sigma_\Psi^2  \left[ \frac{\exp\left( {- \frac{|\by|^2}{ \ell_\Psi^2
     \left( 1 +  (\beta_{\rm int}/\ell_\Psi )^2 \right)} }\right) }{1+  \left( {\beta_{\rm int} }/{\ell_\Psi}\right)^2} 
    \right] . \label{eq:ex1}
 \end{align}
   We give more details of the asymptotic evaluations 
  in the strong scattering  regime \eqref{eq:regime} in Appendix \ref{app:sp}. 
  Expression \eqref{eq:ex1} shows that the support of the incoherent image fluctuations is broadened from the intrinsic source width $\ell_\Psi$  to the effective width
\[
\sqrt{\ell_\Psi^2+\beta_{\rm int}^2}.
\]
Thus $\beta_{\rm int}$  is precisely the scale over which rough-interface scattering spreads the image 
via the associated random wave field modulation or `speckle' like features.  
Several observations follow immediately. First, the broadening grows linearly with the interface depth $z_{\rm int}$. Second, it increases with both the roughness level 
$\sqrt{\CalD}$  and the contrast between the propagation speeds $c_0$ and $c_1$. Third, the parameter $\beta_{\rm int}$ 
is independent of the  source carrier frequency  and depends only on the geometry and statistics of the scattering configuration.

This behavior is a manifestation of the shower-curtain effect at the level of the image {`speckle'}: 
as the wave propagates further from the source to the rough interface, 
the scattering modulation  spreads the image over an increasingly large transverse region.  
The incoherent wave modulation  generated by the scattering
at the rough interface  is not `perfectly time reversed' via the matched  field imaging kernel. 
Consequently, even when the coherent image is strongly suppressed $(|\sigma_V\Delta k|\gg1)$, 
random image modulations  remain of order one inside the broadened region
$
|\by|
\lesssim
\sqrt{\ell_\Psi^2+\beta_{\rm int}^2}.
$
\end{example}

We conclude by noting that the fundamental blurring parameter $ \beta_{\rm int}$ 
will reappear in the optical image, where it controls the deterministic blurring of intensity 
patterns rather than the support of the matched-field speckle.

\section{Optics and the Shower-Curtain Effect in Source Imaging}
\label{sec:opt}

We next consider an optical imaging configuration, illustrated in
Figure~\ref{fig2}, and compare its performance with the matched-field
approach analyzed in the previous section.
The source located at $z=0$ emits a monochromatic wave that propagates
through the rough interface and is subsequently collected by a converging
lens placed in the observation plane $z=z_{\rm obs}$.
The image is formed on a detector plane located at
$z=z_{\rm obs}+z_{\rm i}$, where only the wave intensity is recorded.

As before, the propagation medium is characterized by the piecewise
constant wave speed \eqref{eq:wave_speed}, equal to $c_0$ below the
interface and $c_1$ above it.
Unlike matched-field imaging, which exploits the full complex wave field,
the optical system uses only intensity information. Consequently, phase
distortions induced by the rough interface influence the image only
indirectly through their effect on the spatial distribution of wave
energy.

  \begin{figure}
\begin{center}
\begin{picture}(300,140)
 \linethickness{.35mm}
\put(40,88){\vector(1,0){10}}
\put(40,96){\vector(1,0){10}}
 \thinlines 
\put(10,110){$Source, \Psi$}
   \put(22,85){\vector(0,-1){0}}
 \put(22,85){\vector(0,1){15}}
  \put(24,90){$r_o$}
 \put(5,78){\includegraphics[width=1cm,angle=90]{petitsol2.eps}}
 \put(82,130){$Rough \, Interface$}
 \linethickness{2mm}
  \put(270,70){\line(0,1){40}}
 \thinlines 
 \put(255,120){$Detector$}
 \put(210,130){$Lens$}
  \put(20,20){\vector(-1,0){0}}
 \put(20,20){\vector(1,0){110}}
   \put(60,25){$z_{\rm int}$}
  \put(222,20){\vector(-1,0){0}}
  \put(245,10){$z_{\rm i}$}
  \put(222,20){\vector(1,0){48}}
     \qbezier(220,60)(230,90)(220,120)
   \qbezier(220,60)(210,90)(220,120)
   \put(130,55){\line(0,1){65}}
    \put(20,40){\vector(1,0){201}}
    \put(20,40){\vector(0,1){20}}
     \put(12,44){$\bx$}
     \put(206,30){$z=z_{\rm obs}$}
 \end{picture}
\end{center}
\caption{Source imaging in the optical set-up. The detector is a camera or photodetector that records the wave intensity. The source plane $z=0$ and the detector plane $z=z_{\rm int}+z_{\rm i}$ are conjugate, i.e. the focal length $L$ of the lens located in the plane $z=z_{\rm obs}$ satisfies (\ref{eq:L}).  }
\label{fig2}
\end{figure}

It is convenient to introduce the effective propagation distance
\begin{equation*}
    \tilde z_{\rm obs}
    =
    z_{\rm int} \left( \frac{c_0}{c_1} \right)
    +(z_{\rm obs}-z_{\rm int}),
\end{equation*}
which accounts for the different propagation speeds on either side of the
interface. The detector plane is chosen to be conjugate to the source
plane with respect to the lens, so that the focal length  {$L$} satisfies the following lens equation
\begin{align}\label{eq:L}
    \frac{1}{L}
    =
    \frac{1}{\tilde z_{\rm obs}}
    +
    \frac{1}{z_{\rm i}} .
\end{align}

Throughout the analysis we assume for simplicity an ideal full-aperture lens; see
\cite{GS} for the finite-aperture case.
The corresponding transmission function is
\begin{equation*}
    \tau(\bx)
    =
    \exp\left(
      -i\frac{\omega_{\rm o}|\bx|^2}{2c_1L}
    \right).
\end{equation*}

Using the compensated field $U^\e_{\rm obs}$ defined in
\eqref{eq:wobs2b}, together with \eqref{eq:par1}, the optical image formed in the detector plane
$z=z_{\rm obs}+z_{\rm i}$ is then 
\begin{align}\label{eq:v30}
  \CalI_{\rm opt}(\by)
  =
  \left|
  \frac{\omega_{\rm o}}{2\pi i c_1z_{\rm i}}
  \int
  U^\e_{\rm obs}(\bx)\,
  \tau(\bx)\,
  \exp\left(
    i\frac{\omega_{\rm o}|\by-\bx|^2}{2z_{\rm i}c_1}
  \right)
  d\bx
  \right|^2 .
\end{align}

The rough interface modifies the image through scattering of the
transmitted wave so that it becomes partly coherent.  The following result characterizes the resulting mean
optical image.

\begin{proposition}\label{prop3}
The mean image in the optical configuration is :
\begin{itemize}
\item for $\gamma=1/2$
\begin{equation}\label{eq:v3}
 \E\bigl[\CalI_{\rm opt}(\by)\bigr]
  = \left(\frac{\tilde{z}_{\rm obs}}{z_{\rm i}}\right)^{\!2}
    \int\kappa(\Delta k,\bk) \left|\Psi\!\left(-\frac{\tilde{z}_{\rm obs}}{z_{\rm i}}\by
      +\frac{c_0 z_{\rm int}}{\omega_{\rm o}}\bk\right)\right|^2 d\bk,
\end{equation}
\item for $\gamma<1/2$ and $\e\to 0$
\[
  \E\bigl[\CalI_{\rm opt}(\by)\bigr]
  = \left(\frac{\tilde{z}_{\rm obs}}{z_{\rm i}}\right)^{\!2}
    \left|\Psi\!\left(-\frac{\tilde{z}_{\rm obs}}{z_{\rm i}}\by\right)\right|^2,
\]
\item for $\gamma>1/2$ and $\e\to 0$
\[
   \E\bigl[\CalI_{\rm opt}(\by)\bigr]
  = \bigl|\phi_V(\Delta k)\bigr|^2
    \left(\frac{\tilde{z}_{\rm obs}}{z_{\rm i}}\right)^{\!2}
    \left|\Psi\!\left(-\frac{\tilde{z}_{\rm obs}}{z_{\rm i}}\by\right)\right|^2,
\]
\end{itemize}
with $\phi_V$ defined in \eqref{def:charac_V},  $\Delta k$ given in \eqref{eq:dkdef} and   $\kappa$ introduced  in \eqref{eq:kapdef}.
   \end{proposition}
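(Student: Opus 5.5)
The plan is to follow the proof of Proposition~\ref{prop2} closely. The optical image is the modulus squared of a linear functional of $U^\e_{\rm obs}$, so its mean is a quadratic form in the second moment $\E[U^\e_{\rm obs}(\bx)\overline{U^\e_{\rm obs}(\bx')}]$, which was already computed in \eqref{eq:2ndmom}. I therefore first simplify the deterministic optical kernel, that is, the map $U\mapsto \frac{\omega_{\rm o}}{2\pi i c_1 z_{\rm i}}\int U(\bx)\tau(\bx)e^{i\omega_{\rm o}|\by-\bx|^2/(2z_{\rm i}c_1)}d\bx$, and only then insert the moment formula.

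The first step is to treat the flat-interface template. I write $U^\e_{\rm obs}$ as in \eqref{eq:wobs2b}, with $\mathfrak K^\e$ kept general. The homogeneous part of the propagation from the source to the lens is $\hat G_1(\bk,z_{\rm obs}-z_{\rm int})\hat G_0(\bk,z_{\rm int})=\exp(-i c_1\tilde z_{\rm obs}|\bk|^2/(2\omega_{\rm o}))$. This is the $c_1$-propagator over the effective distance $\tilde z_{\rm obs}$, and that is exactly why $\tilde z_{\rm obs}$ appears. Combined with the lens phase $\tau$ and the free propagation over $z_{\rm i}$, the lens equation \eqref{eq:L} cancels the quadratic phases. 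This is the standard thin-lens imaging identity, proved by completing squares in Gaussian integrals of the type \eqref{eq:par1}. The result is that, for a flat interface, the detector field equals $-(\tilde z_{\rm obs}/z_{\rm i})\,e^{i\theta(\by)}\Psi(-\tilde z_{\rm obs}\by/z_{\rm i})$ up to a unimodular quadratic phase $\theta$, which drops out in the intensity. So the optics map the field at the ``virtual'' plane behind the lens onto a magnified, inverted copy.

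The second step handles the random interface. Since $\hat G_0(\bk',z_{\rm int})$ sits before $\mathfrak K^\e$ and $\hat G_1$ sits after it, I rewrite $\hat G_1(\bk,\cdot)=\hat G_0(\bk,z_{\rm int})^{-1}\exp(-ic_1\tilde z_{\rm obs}|\bk|^2/(2\omega_{\rm o}))$. The lens identity then shows that, up to the same magnification, inversion and phase, the detector field equals the field obtained by propagating $\Psi$ a distance $z_{\rm int}$ in medium $0$, multiplying by the random phase screen $e^{i\Delta k V(\bx/\e^{\gamma-1/2})}$ (the spatial form of $\mathfrak K^\e$), and then back-propagating a distance $z_{\rm int}$ in medium $0$. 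The mean intensity is therefore $(\tilde z_{\rm obs}/z_{\rm i})^2\,\E|W(\bx_*)|^2$, where $\bx_*=-\tilde z_{\rm obs}\by/z_{\rm i}$ and
\[
W=\mathcal G_0(-z_{\rm int})\,e^{i\Delta k V}\,\mathcal G_0(z_{\rm int})\Psi .
\]
Computing $\E|W|^2$ uses the two-point moment of $\mathfrak K^\e$ displayed in the proof of Proposition~\ref{prop2}. That moment contains $\delta((\bk'-\bq')-(\bk-\bq))$ and $\phi_{\Delta V}(\Delta k,\br/\e^{\gamma-1/2})$. This is the same algebra that led to \eqref{eq:mf3}, now with $\by$ replaced by $\bx_*$ and without the cross term, because the quantity is already a modulus squared. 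After the change of variables $\bk=\bq-\bq'$, it yields $\int\kappa(\Delta k,\bk)|\Psi(\bx_*+c_0z_{\rm int}\bk/\omega_{\rm o})|^2d\bk$ for $\gamma=1/2$, which is \eqref{eq:v3}.

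The remaining two regimes follow as in Proposition~\ref{prop2}. For $\gamma<1/2$, $\phi_{\Delta V}\to1$, so the integral over $\br$ gives $\delta(\bq-\bq')$ and $\kappa$ becomes $\delta(\bk)$, which yields the undistorted magnified image. For $\gamma>1/2$, mixing gives $\phi_{\Delta V}\to|\phi_V(\Delta k)|^2$ pointwise for $\br\neq\mathbf 0$, which produces the factor $|\phi_V(\Delta k)|^2$. In this case I will note that the limit is taken in the weak sense on the coherent part; the incoherent part lives on the shrinking set $|\br|\lesssim\e^{\gamma-1/2}$ and is spread over a wide cone of width $\e^{1/2-\gamma}$, so I must argue that its contribution at the fixed point $\bx_*$ vanishes. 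I expect the main obstacle to be the lens step, specifically getting the conjugation bookkeeping right and justifying that $\hat G_1\hat G_0$ combine into a single propagator over $\tilde z_{\rm obs}$ even though $\mathfrak K^\e$ sits between them. I will handle this by factoring as described, so that only a $z_{\rm int}$ back-propagation in medium $0$ remains. This is also why $c_0z_{\rm int}/\omega_{\rm o}$, rather than a distance involving $c_1$, appears as the shift in \eqref{eq:v3}.
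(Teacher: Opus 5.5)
Your proof is correct, but it is organized differently from the paper's. The paper never factors the optics realization by realization. It expands the modulus squared and inserts the second moment \eqref{eq:2ndmom}. It then passes to midpoint/offset variables $\bs=(\bx+\bx')/2$, $\bs'=\bx-\bx'$ and uses the lens equation to turn the $\bs$-integral into $\delta(\bk-\bq-\omega_{\rm o}\bs'/(c_1\tilde z_{\rm obs}))$. Integrating out $\bs'$ cancels the medium-$1$ propagators. The result \eqref{eq:opt_pf4} is recognized as \eqref{eq:varE} with $\by$ replaced by $-\tilde z_{\rm obs}\by/z_{\rm i}$, and the three regimes are read off from Proposition~\ref{prop2}.

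You instead apply the thin-lens identity before averaging, using that all propagators are commuting Fourier multipliers. This gives the pathwise identity $\CalI_{\rm opt}(\by)=(\tilde z_{\rm obs}/z_{\rm i})^2|W(\bx_*)|^2$, a stronger intermediate statement than the paper's. Note also that, by unitarity of paraxial propagation over $\tilde z_{\rm obs}$ (the identity \eqref{eq:tr}), your $W$ coincides with the matched-field image, $W(\by)=\CalI_{\rm mf}(\by)$. Hence $\CalI_{\rm opt}(\by)=(\tilde z_{\rm obs}/z_{\rm i})^2|\CalI_{\rm mf}(-\tilde z_{\rm obs}\by/z_{\rm i})|^2$ for every realization. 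This explains why the paper's moment computation lands exactly on \eqref{eq:varE}. It would also reduce Proposition~\ref{prop4} to the fourth moment of $\CalI_{\rm mf}$ with no further lens bookkeeping. The paper's route, by contrast, mainly buys reuse of \eqref{eq:2ndmom} verbatim inside its Fourier-variable template.

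The point you flag for $\gamma>1/2$ is closed by dominated convergence. Write $\E|W(\bx_*)|^2=\int\phi_{\Delta V}(\Delta k,\br/\e^{\gamma-1/2})\mathcal H(\br)\,d\br$ with $\mathcal H(\br)=(2\pi)^{-2}\int e^{-i\br\cdot\bk}|\Psi(\bx_*+c_0z_{\rm int}\bk/\omega_{\rm o})|^2d\bk$. Here $\mathcal H$ is integrable and $\int\mathcal H=|\Psi(\bx_*)|^2$. Moreover $|\phi_{\Delta V}|\le1$, and $\phi_{\Delta V}(\Delta k,\br/\e^{\gamma-1/2})\to|\phi_V(\Delta k)|^2$ pointwise for $\br\neq\mathbf 0$. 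The same argument gives the limit $1$ when $\gamma<1/2$.
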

   
 \begin{discussion}

The proposition reveals a marked contrast with the matched-field imaging
results of Section~\ref{sec:match}. Whereas matched-field imaging is
sensitive to random phase distortions, optical imaging depends only on
the redistribution of wave energy.

In the absence of interface fluctuations,
$\kappa(\tau,\bk)=\delta(\bk)$ and $\phi_V(\tau)\equiv1$, yielding
\[
\CalI_{\rm opt}(\by)
=
\left(
\frac{\tilde z_{\rm obs}}{z_{\rm i}}
\right)^2
\left|
\Psi\!\left(
-\frac{\tilde z_{\rm obs}}{z_{\rm i}}\by
\right)
\right|^2.
\]
Thus the optical system reconstructs the source intensity profile exactly,
up to the familiar magnification factor
$\tilde z_{\rm obs}/z_{\rm i}$ and image inversion.

The three asymptotic regimes identified previously for matched-field
imaging reappear here, but with a different interpretation.

\emph{Critical regime ($\gamma=1/2$).}
The interface fluctuations occur on the same scale as the beam width.
The scattered energy is redistributed across propagation directions, and
the optical image becomes a blurred version of the ideal image.
The blurring kernel is precisely $\kappa$, which already appeared in the
variance formula for matched-field imaging.
The width of this blur scales with the effective propagation distance
from the source to the interface and is controlled by the parameter
$\beta_{\rm int}$ introduced in \eqref{def:Bm}.
This regime exhibits the shower-curtain effect most clearly:
the farther the rough interface is from the source, the larger the image
blur. Conversely, when the interface is close to the source, the image
remains relatively sharp despite the presence of strong scattering.

\emph{Large-scale interface fluctuations ($\gamma<1/2$).}
The interface varies on a scale much larger than the beam width.
Across the beam footprint the induced phase perturbation is therefore
approximately constant. Such a phase factor does not affect intensity,
and the optical image is asymptotically identical to that obtained for a
flat interface. In contrast to matched-field imaging, no loss of
coherence is visible at the level of the mean image.

\emph{Small-scale interface fluctuations ($\gamma>1/2$).}
The interface fluctuates on scales much shorter than the beam width.
The rapidly varying phase perturbations average out, and the interface
acts effectively as a homogenized random phase screen.
The image shape is preserved, but its overall contrast is reduced by the
factor $|\phi_V(\Delta k)|^2$.
Consequently, the dominant effect is a loss of coherent energy rather
than geometric blurring.
When $\sigma_V\Delta k\gtrsim1$, the mean image becomes strongly
attenuated and therefore increasingly sensitive to detector noise.

Overall, the optical system exhibits two distinct manifestations of the
rough interface. In the critical regime, the interface causes genuine
image blur whose magnitude grows with distance from the source, giving
rise to the shower-curtain effect. In the rapidly fluctuating regime, the
primary effect is instead a loss of image contrast through incoherent
scattering.

\end{discussion}

\begin{proof}

Using the definition \eqref{eq:v30} of the optical image and expanding the
modulus squared gives
\begin{align*}
\E[\mathcal I_{\rm opt}(\by)]
=
\frac{\omega_{\rm o}^2}{(2\pi c_1 z_{\rm i})^2}
\iint &
\E\!\left[
U^\e_{\rm obs}(\bx)
\overline{U^\e_{\rm obs}(\bx')}
\right]\exp\!\left(
-\frac{i\omega_{\rm o}}{2c_1L}
\bigl(|\bx|^2-|\bx'|^2\bigr)
\right)
\nonumber\\
&\times
\exp\!\left(
\frac{i\omega_{\rm o}}{2c_1z_{\rm i}}
\bigl(|\by-\bx|^2-|\by-\bx'|^2\bigr)
\right)
d\bx\,d\bx'.
\end{align*}
Introducing the midpoint and offset variables
\[
\bs=\frac{\bx+\bx'}2,
\qquad
\bs'=\bx-\bx',
\]
and using
\[
|\bx|^2-|\bx'|^2 = 2\bs\cdot\bs',
\qquad
|\by-\bx|^2-|\by-\bx'|^2
=
-2(\by-\bs)\cdot\bs',
\]
we obtain
\begin{align}
\E[\mathcal I_{\rm opt}(\by)]
&=
\frac{\omega_{\rm o}^2}{(2\pi c_1 z_{\rm i})^2}
\iint
\E\!\left[
U^\e_{\rm obs}\!\left(\bs+\frac{\bs'}2\right)
\overline{
U^\e_{\rm obs}\!\left(\bs-\frac{\bs'}2\right)}
\right]
\nonumber\\
&\qquad\times
\exp\!\left[
-i\frac{\omega_{\rm o}}{c_1L}\bs\cdot\bs'
-i\frac{\omega_{\rm o}}{c_1z_{\rm i}}
(\by-\bs)\cdot\bs'
\right]
d\bs\,d\bs'.
\label{eq:opt_pf2}
\end{align}
Substituting the second-order field moment \eqref{eq:2ndmom} into
\eqref{eq:opt_pf2} yields
\begin{align*}
\E[\mathcal I_{\rm opt}(\by)]
=
\frac{\omega_{\rm o}^2}
     {(2\pi)^8(c_1z_{\rm i})^2}
 \int\dots&\int
e^{-i\frac{\omega_{\rm o}}{c_1L}\bs\cdot\bs'}
e^{-i\frac{\omega_{\rm o}}{c_1z_{\rm i}}e^{i(\bk-\bq)\cdot\bs}
e^{i(\bk+\bq)\cdot\bs'/2}
(\by-\bs)\cdot\bs'}
\nonumber\\
&\times
 {e^{-i\br\cdot(\bq-\bq')}}
\phi_{\Delta V}
\!\left(
\Delta k,
\frac{\br}{\e^{\gamma-1/2}}
\right)
\nonumber\\
&\times
\hat G_1(\bk )
\hat G_0(\bk-(\bq-\bq') )\overline{\hat G_1(\bq )}
\,
\overline{\hat G_0(\bq' )}
\\
&\times 
\hat\Psi(\bk-(\bq-\bq'))
\overline{\hat\Psi(\bq')}
\,
d\bk\,d\bq\,d\bq'\,d\br\,d\bs\,d\bs' . \nonumber
\end{align*}
The integration with respect to $\bs$ gives after using the lens equation \eqref{eq:L}
\[
(2\pi)^2
\delta\!\left(
\bk-\bq
-\frac{\omega_{\rm o}}{c_1\tilde z_{\rm obs}}
\bs'
\right).
\]
Using the identity 
$ \tilde z_{\rm obs} = z_{\rm obs}-z_{\rm int} +  z_{\rm int}  c_0/c_1$
and integrating  with respect to $\bs'$ gives then phases that cancel 
with the propagators  in the right part of the medium and produces propagators
in the left part as 
\begin{align}\label{eq:opt_pf4}
\E[\mathcal I_{\rm opt}(\by)]
=
\frac1{(2\pi)^6}
\left(
\frac{\tilde z_{\rm obs}}{z_{\rm i}}
\right)^2
\iiiint &
e^{i(\bk-\bq)\cdot
\left(
-\frac{\tilde z_{\rm obs}}{z_{\rm i}}\by
\right)}
 {e^{-i\br\cdot(\bq-\bq')}} \phi_{\Delta V}
\!\left(
\Delta k,
\frac{\br}{\e^{\gamma-1/2}}
\right)
\nonumber\\
&\times
\hat G_0(\bk-(\bq-\bq') )
\overline{\hat G_0(\bq' )} \overline{\hat G_0(\bk )}
\hat G_0(\bq ) \\
& \times
\hat\Psi(\bk-(\bq-\bq'))
\overline{\hat\Psi(\bq')}
\,
d\bk\,d\bq\,d\bq'\,d\br .\nonumber
\end{align}
By comparing with \eqref{eq:varE} we see that 
 remaining calculation is
identical to the evaluation of the second moment of the matched-field
image, with the image coordinate
\[
\by
\quad\text{replaced by}\quad
-\frac{\tilde z_{\rm obs}}{z_{\rm i}}\by .
\]
Applying the  asymptotic limits established in
Proposition~\ref{prop2} therefore yields
\eqref{eq:v3}. In particular,

\begin{itemize}
\item for $\gamma=1/2$, the limiting kernel is $\kappa$, leading to the
convolutional blur in \eqref{eq:v3};

\item for $\gamma<1/2$, the kernel
$\phi_{\Delta V}(\Delta k,\br/\e^{\gamma-1/2})$
converges to $1$ and the ideal optical image is recovered;

\item for $\gamma>1/2$, the kernel converges to
$|\phi_V(\Delta k)|^2$, producing the attenuation factor
$|\phi_V(\Delta k)|^2$ multiplying the ideal image.
\end{itemize}

This proves the proposition.

\end{proof}

\subsection{Statistical stability of the optical image }


We now characterize the statistical stability of the optical image,
or equivalently its signal-to-noise ratio.
In contrast to the matched-field functional considered in the previous section,
the optical image is formed from wave intensities and therefore depends quadratically
on the wave field.
Consequently, the variance of the optical image involves fourth-order moments of the
random transmission operator.

To obtain explicit expressions we focus on the strong-scattering regime
\eqref{eq:regime}, together with the critical scaling $\gamma=1/2$,  {and assuming from now on $V$ to be a Gaussian random field}.
As already seen in Example~\ref{ex1}, this is the regime in which the
rough interface produces a nontrivial competition between coherent imaging
and {`speckle'} formation.
The following lemma, for which the proof is deferred to Appendix~\ref{app:lem}, provides the asymptotic fourth-order moment 
of the scattering kernel needed in the
variance calculation.

\begin{proposition}\label{lem1}
For $\gamma=1/2$, we have in the strong-scattering regime \eqref{eq:regime}:
\begin{align}
  \E\Bigl[
    \mathfrak{K}^\e(\tau,\bk,\bk')\,
    \overline{\mathfrak{K}^\e(\tau,\bq,\bq')} & \,
    \overline{\mathfrak{K}^\e(a\tau,\tbk,\tbk')}\,
    \mathfrak{K}^\e(a\tau,\tbq,\tbq')
  \Bigr]
  \nonumber\\
  &\quad=
    \delta\!\bigl((\bk'-\bk)-(\bq'-\bq)\bigr)\,
    \delta\!\bigl((\tbk'-\tbk)-(\tbq'-\tbq)\bigr)
  \nonumber\\
  &\qquad\times
    \delta\!\bigl(a(\bk'-\bk)-(\tbk'-\tbk)\bigr)\,
    \frac{
      \exp\!\left(
      -\dfrac{ |\tbq'-\tbq |^2}
      {2(a\tau)^2\CalD}
      \right)}
    {2\pi\tau^2\CalD},
\label{eq:4K}
\end{align}
where $\CalD$ is defined in Remark~\ref{remark1}.
\end{proposition}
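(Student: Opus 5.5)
The plan is to read \eqref{eq:4K} as a geometric-optics (stationary-phase) statement. In the strong-scattering regime \eqref{eq:regime}, each kernel $\mathfrak K^\e(\tau,\cdot,\cdot)$ acts locally as a deflection by the local interface slope. The identity then says that all four kernels see the \emph{same} random slope $\nabla V$. The Gaussian factor is the law of that slope, which is $\mathcal N(0,\CalD\,I_2)$ by isotropy, since $\E[\partial_iV\,\partial_jV]=-C''(0)\delta_{ij}=\CalD\,\delta_{ij}$ with $\CalD$ as in Remark~\ref{remark1}.

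\textbf{Step 1: four-point characteristic function.} For $\gamma=1/2$ the rescaling in \eqref{eq:K} disappears. I will write the fourth moment as an integral over $\bx_1,\dots,\bx_4$ of
\[
e^{i[(\bk'-\bk)\cdot\bx_1-(\bq'-\bq)\cdot\bx_2-(\tbk'-\tbk)\cdot\bx_3+(\tbq'-\tbq)\cdot\bx_4]}\,
\E\big[e^{i\tau(V(\bx_1)-V(\bx_2))-ia\tau(V(\bx_3)-V(\bx_4))}\big],
\]
with normalization $(2\pi)^{-8}$. Since $V$ is Gaussian, the expectation equals $\exp(-\tfrac12{\rm Var}[\cdots])$. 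The combination of $V$'s has zero total weight, because the coefficients are $\tau,-\tau,-a\tau,a\tau$. Hence any common offset $V(\bx_0)$ cancels, and the expectation is invariant under joint translation of $\bx_1,\dots,\bx_4$.

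\textbf{Step 2: localization and linearization.} Under $|\sigma_V\Delta k|\gg1$ the variance is huge unless all four points lie within a region small compared with $\ell_V$. I will therefore set $\bx_j=\bx_0+\by_j$ and Taylor expand
\[
V(\bx_j)\approx V(\bx_0)+\nabla V(\bx_0)\cdot\by_j .
\]
This is legitimate by Assumption~\ref{ass1}, since $V$ has bounded second derivatives, and the curvature corrections are controlled in the scaling \eqref{eq:regime}. Equivalently, on each patch one has
\[
\mathfrak K^\e(\tau,\bk,\bk')\approx e^{i\tau V(\bx_0)}\,\delta\bigl(\bk'-\bk+\tau\nabla V(\bx_0)\bigr).
\]
The four phases $e^{\pm i\tau V(\bx_0)}$, $e^{\mp ia\tau V(\bx_0)}$ cancel. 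The product therefore reduces to
\[
\E\Big[\delta(\bk'-\bk+\tau\bv)\,\delta(\bq'-\bq+\tau\bv)\,\delta(\tbk'-\tbk+a\tau\bv)\,\delta(\tbq'-\tbq+a\tau\bv)\Big],
\]
where $\bv=\nabla V(\bx_0)\sim\mathcal N(0,\CalD\,I_2)$.

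\textbf{Step 3: evaluate the expectation.} Integrating against the Gaussian density $(2\pi\CalD)^{-1}e^{-|\bv|^2/(2\CalD)}$ uses the last delta to set $\bv=-(\tbq'-\tbq)/(a\tau)$. This produces the exponential in \eqref{eq:4K} together with a Jacobian factor from $\delta(a\tau\,\cdot)$ in $\R^2$. The remaining three deltas become the stated ones: $\delta((\bk'-\bk)-(\bq'-\bq))$, $\delta((\tbk'-\tbk)-(\tbq'-\tbq))$, and $\delta(a(\bk'-\bk)-(\tbk'-\tbk))$. I will then collect the powers of $2\pi$ and $a$ against the $\bx_0$-averaging (stationarity of $V$) to match the prefactor $1/(2\pi\tau^2\CalD)$.

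\textbf{Main obstacle.} The hard step is Step 2. Specifically, I must justify that configurations where the pair $(\bx_1,\bx_2)$ and the pair $(\bx_3,\bx_4)$ are far apart, and so feel independent slopes, are negligible or appropriately absorbed in the asymptotic distributional sense. I also need to control the Taylor remainder uniformly. My first attempt would be a Laplace-type argument as in Appendix~\ref{app:sp}. I would rescale the separations by $(\sigma_V\Delta k)^{-1}$, show the quadratic-form covariance converges to that of the gradient field, and use dominated convergence after pairing with smooth test functions in $(\bk,\bk',\bq,\bq',\tbk,\tbk',\tbq,\tbq')$.
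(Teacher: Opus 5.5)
Your Steps 1--3 are the paper's proof in different language. The paper also writes the moment through the Gaussian characteristic function and integrates out one point by stationarity, which produces a first delta. It then replaces the variance of $V(\mathbf 0)-V(\by_1)-aV(\by_2)+aV(\by_3)$ by its local form $\CalD|\by_1+a(\by_2-\by_3)|^2$. This is exactly your common-slope linearization with $\nabla V\sim\mathcal N(0,\CalD I_2)$. Finally it integrates over $\bz=\by_1+a(\by_2-\by_3)$ and over $\by_2,\by_3\in\R^2$ to obtain the Gaussian and the remaining deltas. Your Step~3 already gives the exact prefactor. With your $\bv=\nabla V(\bx_0)$, the identities $\delta(\tbq'-\tbq+a\tau\bv)=(a\tau)^{-2}\delta\bigl(\bv+(\tbq'-\tbq)/(a\tau)\bigr)$ and $\delta\bigl(\bk'-\bk-(\tbk'-\tbk)/a\bigr)=a^2\,\delta\bigl(a(\bk'-\bk)-(\tbk'-\tbk)\bigr)$ combine to $1/(2\pi\tau^2\CalD)$. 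No extra factor from averaging over $\bx_0$ is needed.

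The one real issue is the one you flag, and the paper does not resolve it either. It asserts that $\by_1,\by_2,\by_3$ are near the origin, yet integrates $\by_2,\by_3$ over all of $\R^2$ with the linearized formula. Your Step~2 sentence is false as stated. The variance is $O(1)$ whenever $\bx_1\approx\bx_2$ and $\bx_3\approx\bx_4$, however far apart the two pairs are; for $a=\pm1$ the same holds for one crossed pairing. These configurations build the product of second moments $\E[\mathfrak K^\e\overline{\mathfrak K^\e}]\,\E[\overline{\mathfrak K^\e}\mathfrak K^\e]$.

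Put $\Lambda:=\tau\sqrt{\CalD}$. If you pair with test functions fixed in the original momentum variables, these terms are not negligible, and your dominated-convergence step has no majorant. In the rescaled separations $\hat\by_j:=\Lambda\by_j$, the limiting integrand $\exp\bigl(-\tfrac12|\hat\by_1+a(\hat\by_2-\hat\by_3)|^2\bigr)$ does not decay in $(\hat\by_2,\hat\by_3)$. In that sense \eqref{eq:4K} does not even have the right order: its right-hand side is of order $\Lambda^{-2}$, while the left-hand side is only $O(\Lambda^{-4}\log\Lambda)$.

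The statement is correct, and your plan closes in a few lines, once all momentum shifts are measured in units of $\Lambda$, the width of the right-hand side. That is, test against $\chi(\cdot/\Lambda)$ with $\chi$ fixed. After the substitution $\bx_j=\hat\bx_j/\Lambda$, the following hold.
\begin{itemize}
\item Both sides carry the same power of $\Lambda$.
\item The dual test function is a fixed Schwartz function of $\hat\bx$.
\item The characteristic function is bounded by $1$.
\item A first-order Taylor expansion converges pointwise to $\exp\bigl(-\tfrac12|\hat\by_1+a(\hat\by_2-\hat\by_3)|^2\bigr)$, with second derivatives controlled by Assumption~\ref{ass1} and remainder $O((\tau\sigma_V)^{-1})$ at fixed $\hat\bx$. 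This uses that $\nabla V(\mathbf 0)/\sqrt{\CalD}$ is standard Gaussian.
\end{itemize}
Dominated convergence then yields \eqref{eq:4K}. The separated-pair configurations sit at rescaled distances $\gtrsim\Lambda\ell_V\to\infty$, so they drop out with no separate estimate.
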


Using Proposition~\ref{lem1}, and proceeding along the same lines as in the proof of Proposition~\ref{prop3}, we obtain the following characterization of the mean image and its covariance.

\begin{proposition}\label{prop4}
In the strong-scattering regime \eqref{eq:regime} and for $\gamma=1/2$,
the mean optical image is
\begin{align}
\E\!\left[\CalI_{\rm opt}(\by)\right]
& =
\left(
\frac{\tilde z_{\rm obs}}{z_{\rm i}}
\right)^2
\int 
 \frac{1}
{2\pi(\Delta k)^2\CalD} \exp\!\left(
-\dfrac{|\bk|^2}
{2(\Delta k)^2\CalD}
\right)
\nonumber\\
&\hspace{4cm}\times
\left|
\Psi\!\left(
-\frac{\tilde z_{\rm obs}}{z_{\rm i}}\by
+
\frac{c_0 z_{\rm int}}{\omega_{\rm o}}\bk
\right)
\right|^2
d\bk
\nonumber\\
&=
\left(
\frac{\tilde z_{\rm obs}}{z_{\rm i}}
\right)^2
\int
\frac{e^{-|\bk|^2/2}}{2\pi}
\left|
\Psi\!\left(
-\frac{\tilde z_{\rm obs}}{z_{\rm i}}\by
+
\beta_{\rm int}\bk
\right)
\right|^2
d\bk ,
\label{eq:v3b}
\end{align}
where $\CalD$ is defined in \eqref{eq:Ddef}
and $\beta_{\rm int}$ in \eqref{def:Bm}. Moreover, for any constant $a$ the covariance of the optical images
at frequencies $\omega_{\rm o}$ and $a\omega_{\rm o}$ is
\begin{align*}
{\rm Cov}
\bigl[
\CalI_{\rm opt}(\by;\omega_{\rm o})&,
\CalI_{\rm opt}(\by';a\omega_{\rm o})
\bigr]
\nonumber\\
&=
\left(
\frac{\tilde z_{\rm obs}}{z_{\rm i}}
\right)^4
\iint
\left|
\Psi\!\left(
-\frac{\tilde z_{\rm obs}}{z_{\rm i}}\by
+\beta_{\rm int}\bk
\right)
\right|^2
\left|
\Psi\!\left(
-\frac{\tilde z_{\rm obs}}{z_{\rm i}}\by'
+\beta_{\rm int}\tbk
\right)
\right|^2
\nonumber\\
&\hspace{2cm}\times
\left(
\delta(\bk-\tbk)\frac{e^{-|\bk|^2/2}}{2\pi}
-
\frac{e^{-|\bk|^2/2}}{2\pi}
\frac{e^{-|\tbk|^2/2}}{2\pi}
\right)
d\bk\,d\tbk .
\end{align*}
\end{proposition}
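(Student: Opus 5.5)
The mean image formula follows directly from Proposition~\ref{prop3} in the critical case $\gamma=1/2$. The only extra input is the strong-scattering form of $\kappa(\Delta k,\cdot)$. For Gaussian $V$ we have $\phi_{\Delta V}(\tau,\br)=\exp(-\tau^2\sigma_V^2[1-\rho_V(|\br|/\ell_V)])$. When $|\sigma_V\Delta k|\gg1$, this is concentrated near $\br={\bf 0}$, so we expand $C(0)-C(|\br|)\simeq \CalD|\br|^2/2$. This gives $\phi_{\Delta V}\simeq e^{-\tau^2\CalD|\br|^2/2}$, whose Fourier transform is
\[
\kappa(\Delta k,\bk)\simeq \frac{1}{2\pi(\Delta k)^2\CalD}\exp\!\Big(-\frac{|\bk|^2}{2(\Delta k)^2\CalD}\Big).
\]
This is the same Laplace-type argument already used in Example~\ref{ex1}, and it is consistent with the $a=1$ marginal of Proposition~\ref{lem1}. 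Inserting it into \eqref{eq:v3} yields the first line of \eqref{eq:v3b}. Rescaling $\bk\mapsto \Delta k\sqrt{\CalD}\,\bk$ then gives the second line, since $\frac{c_0 z_{\rm int}}{\omega_{\rm o}}\Delta k\sqrt{\CalD}=z_{\rm int}|1-c_0/c_1|\sqrt{\CalD}=\beta_{\rm int}$ by \eqref{eq:dkdef} and \eqref{def:Bm}.

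For the covariance, I will compute $\E[\CalI_{\rm opt}(\by;\omega_{\rm o})\CalI_{\rm opt}(\by';a\omega_{\rm o})]$. Expanding both moduli squared in \eqref{eq:v30} produces a fourfold product $U^\e_{\rm obs}\overline{U^\e_{\rm obs}}\,\overline{\tilde U^\e_{\rm obs}}\tilde U^\e_{\rm obs}$. Here $\tilde U$ denotes the field at frequency $a\omega_{\rm o}$, whose scattering operator is $\mathfrak K^\e(a\Delta k,\cdot,\cdot)$ because $\Delta k$ scales linearly with frequency. Substituting \eqref{eq:wobs2b} gives an integral against the fourth moment in Proposition~\ref{lem1}. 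For each frequency separately, I repeat the steps of the proof of Proposition~\ref{prop3}: pass to midpoint/offset variables, integrate out $\bs$ using the lens equation, and then integrate out $\bs'$. This cancels the propagators $\hat G_1$ and reduces each factor to the structure of \eqref{eq:opt_pf4}, with image points $-\frac{\tilde z_{\rm obs}}{z_{\rm i}}\by$ and $-\frac{\tilde z_{\rm obs}}{z_{\rm i}}\by'$. The lens equation is frequency independent, so the magnification is the same at both frequencies, and the prefactor is $(\tilde z_{\rm obs}/z_{\rm i})^4$.

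The fourth moment \eqref{eq:4K} contains two delta functions that are each exactly the pairing delta from the second-order moment at one frequency. The extra factor $\delta(a(\bk'-\bk)-(\tbk'-\tbk))$ ties the two momentum transfers together. I set $\bk$ and $\tbk$ to be the rescaled transfers, $(\bq-\bq')/(\Delta k\sqrt{\CalD})$ and its analogue at frequency $a\omega_{\rm o}$, whose Gaussian width is $a\Delta k\sqrt{\CalD}$. The scaling by $a$ then converts this delta into $\delta(\bk-\tbk)$. The shift in the argument of $\Psi$ at frequency $a\omega_{\rm o}$ is $\frac{c_0z_{\rm int}}{a\omega_{\rm o}}(a\Delta k\sqrt{\CalD})\tbk=\beta_{\rm int}\tbk$, so it is independent of $a$. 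The remaining Gaussian factor, with Jacobians, becomes $\delta(\bk-\tbk)e^{-|\bk|^2/2}/(2\pi)$. Performing the $\bq'$-type integrations then yields the product of the two $|\Psi|^2$ factors, exactly as in the mean computation. Subtracting the product of the two means, which by the first part have the factorized form with $\frac{e^{-|\bk|^2/2}}{2\pi}\frac{e^{-|\tbk|^2/2}}{2\pi}$, gives the stated covariance.

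The main obstacle is the bookkeeping in the fourth-moment step. The oscillatory phases from the four $\hat G_0$ factors and from $e^{-i\br\cdot(\bq-\bq')}$ must combine so that the $\Psi$ arguments carry the same shift $\beta_{\rm int}\bk$, and the Jacobians must produce exactly $\delta(\bk-\tbk)$ with unit weight. I would first verify this in the case $a=1$, $\by=\by'$, using the structure of \eqref{eq:varE}, and then track the $a$-dependence.
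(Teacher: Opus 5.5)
Your proposal is correct and follows the same route as the paper. The mean comes from Proposition~\ref{prop3} at $\gamma=1/2$ with $\kappa(\Delta k,\cdot)$ replaced by its strong-scattering Gaussian form; the covariance comes from inserting the fourth-moment formula of Proposition~\ref{lem1}, reducing each frequency as in the proof of Proposition~\ref{prop3} (lens equation, cancellation of the $\hat G_1$ propagators), and then subtracting the product of the means. Your explicit rescaling of the momentum transfers by $\Delta k\sqrt{\CalD}$ and $a\Delta k\sqrt{\CalD}$ turns the cross-frequency delta into $\delta(\bk-\tbk)$ with unit weight and makes the shift $\beta_{\rm int}\tbk$ independent of $a$; this spells out bookkeeping that the paper's outline leaves implicit.
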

\begin{discussion}
A striking consequence of Proposition~\ref{prop4} is that,
in the strong-scattering/high-frequency   regime,
both the resolution and the signal-to-noise ratio of the optical image
are governed by a single parameter, namely the blurring parameter
$\beta_{\rm int}$ introduced in Remark~\ref{remark1}.
This is the same parameter that appeared in Example~\ref{ex1} for
matched-field imaging, showing that the influence of the rough interface
on coherent and incoherent imaging is controlled by the same underlying
geometric mechanism. Note that the blurring parameter in this  case with 
one rough interface fluctuating on the scale of the beam width  is independent 
of the carrier frequency $\omega_0$. 

The mean image is obtained by averaging the ideal image intensity
against a Gaussian kernel of width $\beta_{\rm int}$.
Consequently, $\beta_{\rm int}$ directly quantifies the amount of
resolution loss induced by the rough interface.
Since
\[
\beta_{\rm int}
=
z_{\rm int}
\left|1-\frac{c_0}{c_1}\right|
\sqrt{\CalD},
\]
the image degradation increases linearly with the distance between the
source and the interface, increases with the medium contrast,
and increases with the roughness of the interface fluctuations.
This is precisely the shower-curtain effect:
the farther the scattering layer is from the source,
the larger the blur in the reconstructed image.

To interpret the covariance formula, define
\begin{equation*}
h(\by,\bx)
=
\left(
\frac{\tilde z_{\rm obs}}{z_{\rm i}}
\right)^2
\left|
\Psi\!\left(
-\frac{\tilde z_{\rm obs}}{z_{\rm i}}\by
+
\beta_{\rm int}\bx
\right)
\right|^2 .
\end{equation*}
If $\bX$ is a standard Gaussian random vector in $\R^2$, then
\begin{align}
\E\!\left[\CalI_{\rm opt}(\by)\right]
&=
\E\!\left[h(\by,\bX)\right],
\label{eq:v4}
\\
{\rm Cov}
\!\left[
\CalI_{\rm opt}(\by;\omega_{\rm o}),
\CalI_{\rm opt}(\by';a\omega_{\rm o})
\right]
&=
{\rm Cov}
\!\left[
h(\by,\bX),
h(\by',\bX)
\right].
\label{eq:v4b}
\end{align}
Thus the image fluctuations can be viewed as arising from a single
effective Gaussian displacement of the ideal image intensity profile.
When $h$ varies slowly with respect to its second argument,
equivalently when the image features are broad compared with
$\beta_{\rm int}$,
the covariance is small and the optical image is statistically stable.
Conversely, when $\beta_{\rm int}$ is comparable to the characteristic
image scale, both image blur and image fluctuations become significant.

An important consequence of Proposition \ref{prop4} is that, in the 
critical regime $\gamma=1/2$ with 
strong-scattering  \eqref{eq:regime}, averaging over frequency or image position does not significantly improve
the signal-to-noise ratio. Indeed, the leading-order covariance is independent of the frequency
ratio $a$, so that optical images obtained at different frequencies remain strongly correlated.
Moreover, the covariance decays with the spatial offset $|\by-\by'|$ only through the variation
of the source intensity profile itself. Hence the correlation length of the image fluctuations is
of the same order as the image support.
Physically, the rough interface acts as a single random phase
screen that imprints a common random distortion on the entire image. The resulting fluctuations
therefore remain coherent across both frequency and image position. Consequently, neither
multifrequency averaging nor local spatial smoothing substantially reduces the image variance.
This is another manifestation of the shower-curtain effect: the dominant uncertainty is carried
by large-scale image distortions rather than by rapidly varying speckle fluctuations.
In fact in view of \eqref{eq:v4b} one `effective'  Gaussian random variable $\bX$ 
drives the fluctuations at all frequencies and image locations. Thus the randomness is effectively low-dimensional (a common random image displacement/blurring variable) rather than a high-dimensional speckle field. 
This is precisely why averaging is ineffective: different measurements do not provide statistically independent realizations of the fluctuations. That interpretation fits very naturally with the physical picture of a shower curtain, where moving one's eye slightly or changing color (frequency) does not remove the apparent blur via averaging.
 Indeed in the critical regime the features of the curtain
is on the scale of the source support. 

Finally, it is worth contrasting the critical regime $\gamma=1/2$
with the two other scaling regimes.
For $\gamma<1/2$, the interface fluctuations vary on scales larger than
the beam width and act primarily through an almost uniform random phase.
Since optical imaging depends on intensity rather than phase,
the image remains statistically stable to leading order.
For $\gamma>1/2$, the interface fluctuations vary on scales much smaller
than the beam width and their effect self-averages across the optical
aperture.
The image fluctuations are therefore again weak, although the coherent
image may be attenuated by the factor $|\phi_V(\Delta k)|^2$.
Hence the critical regime $\gamma=1/2$ is the unique scaling in which
nontrivial image blur and nontrivial image fluctuations persist
simultaneously in the strong-scattering/high-frequency limit.

\end{discussion}
 
\begin{proof}[Outline of the proof Proposition \ref{prop4}]
The derivation follows the same strategy as in the proof of Proposition~\ref{prop3}. The essential difference is that the optical image is quadratic in the wave field. Consequently, the second moment of the optical image involves a fourth-order moment.  In the strong-scattering regime \eqref{eq:regime}, the fourth-order moment 
of the scattering operator is characterized by Lemma~\ref{lem1}.

Starting from the definition \eqref{eq:v30}, we write
\begin{align*}
\E\big[
{\mathcal I}{\rm opt}(\by;\omega_{\rm o})
{\mathcal I}{\rm opt}&(\by';a\omega_{\rm o})
\big]
\\
=
\frac{a^2\omega_{\rm o}^4}
{(2\pi c_1 z_{\rm i})^4}
 \iiiint &
\E\Big[
U^\e_{\rm obs}(\bx;\omega_{\rm o})
\overline{U^\e_{\rm obs}(\bx';\omega_{\rm o})}
U^\e_{\rm obs}(\tilde{\bx};a\omega_{\rm o})
\overline{U^\e_{\rm obs}(\tilde{\bx}';a\omega_{\rm o})}
\Big]
\\
&\times
\exp\left(
-\frac{i\omega_{\rm o}}{2c_1L}
\big(|\bx|^2-|\bx'|^2\big)
\right)
\exp\left(
\frac{i\omega_{\rm o}}{2c_1z_{\rm i}}
\big(|\by-\bx|^2-|\by-\bx'|^2\big)
\right)
\\
&\times
\exp\left(
-\frac{ia\omega_{\rm o}}{2c_1L}
\big(|\tilde{\bx}|^2-|\tilde{\bx}'|^2\big)
\right)
\exp\left(
\frac{ia\omega_{\rm o}}{2c_1z_{\rm i}}
\big(|\by'-\tilde{\bx}|^2-|\by'-\tilde{\bx}'|^2\big)
\right) 
 \\
&\times
d\bx d\bx' d\tilde{\bx} d\tilde{\bx}' .
\end{align*}
Then we use  \eqref{eq:wobs2b},  Lemma \ref{lem1}, the lens equation \eqref{eq:L}   
and proceed  as in the proof of Proposition \ref{prop3}
 to  find the generalization of \eqref{eq:opt_pf4}. 
 \begin{align*}
 \E\big[{\mathcal I}_{\rm opt}(\by; \omega_{\rm o}) {\mathcal I}_{\rm opt}&(\by'; a\omega_{\rm o}) \big]\\
  =
\frac{1}{(2\pi)^8}   \left( \frac{\tilde{z}_{\rm obs} }{z_{\rm i}}\right)^4  &
\int\dots \int       e^{i(\bk-\bq) \cdot \left( - \frac{\tilde{z}_{\rm obs} }{z_{\rm i}}   \by \right)}
e^{i(\tilde\bk-\tilde\bq) \cdot \left( - \frac{\tilde{z}_{\rm obs} }{z_{\rm i}}   \by' \right)}
             \nonumber      \\ & \times 
  \hat{G}_0(\omega_{\rm o},\bk-(\bq-\bq') )
 \,  \overline{\hat{G}_0(\omega_{\rm o},\bq' )} \, 
    \overline{\hat{G}_0(\omega_{\rm o},\bk  ) }
 \,   {\hat{G}_0(\omega_{\rm o},\bq )}
            \nonumber      \\ &\times 
  \hat{G}_0(a\omega_{\rm o},\tilde\bk-(\tilde\bq-\tilde\bq') )
 \,  \overline{\hat{G}_0(a\omega_{\rm o},\tilde\bq' )} \, 
    \overline{\hat{G}_0(a\omega_{\rm o},\tilde\bk  ) }
 \,   {\hat{G}_0(a\omega_{\rm o},\tilde\bq )} 
\\ 
& \times   \delta(a(\bq'-\bq) - (\tbq'-\tbq))  \frac{1
}
{
2\pi(\Delta k)^2\CalD
} \exp\left(
-\dfrac{
|\tilde{\bq}'-\tilde{\bq}|^2
}{
2(a\Delta k)^2\CalD
}
\right)
\\
&\times
\hat\Psi\big(\bk-(\bq-\bq')\big)
\overline{\hat\Psi(\bq')}
\hat\Psi\big(\tilde{\bk}-(\tilde{\bq}-\tilde{\bq}')\big)
\overline{\hat\Psi(\tilde{\bq}')} \, 
 d\bk d\bq d\bq'
 d\tilde{\bk} d\tilde{\bq} d\tilde{\bq}' .
\end{align*}
 Then we can conclude in a similar fashion as  in the proof of Proposition \ref{prop3},
with a reduction analogous to the one of deriving   \eqref{eq:v3}  from  \eqref{eq:opt_pf4}.
This then gives the second moment of the optical imaging function.
The expression for the mean  image in \eqref{eq:v3b}   follows from the expression \eqref{eq:v3} 
upon substituting with the asymptotic form of the scattering kernel  $\kappa$ 
on the strong-scattering regime. 
Finally, subtracting the product of the means yields the covariance formula stated
 in the proposition. 
\end{proof}

\subsection{Illustrations}

The results of the previous subsection show, in the critical regime
$\gamma=1/2$, with Gaussian interface fluctuations and source profile \eqref{eq:defSV}, that under the strong-scattering condition
\eqref{eq:regime}, the effect of the rough interface on the optical image
is governed by a single dimensionless quantity,
\[
\frac{\beta_{\rm int}}{\ell_\Psi},
\]
which compares the characteristic blur induced by the interface with the
intrinsic width of the source.  The parameter
$\beta_{\rm int}$ therefore plays the role of an effective
``shower-curtain'' length scale for source imaging.

\begin{example31B}
We revisit the configuration of Example~\ref{ex1} corresponding to
$\gamma=1/2$.  In the high-frequency regime \eqref{eq:regime},
the representation \eqref{eq:v4} gives
\begin{align*}
\E\!\left[\CalI_{\rm opt}(\by)\right]
&\sim
\sigma_\Psi^2
\left(
\frac{\tilde z_{\rm obs}}{z_{\rm i}}
\right)^2
\frac{
\exp\!\left(
-\dfrac{
\left(
\frac{\tilde z_{\rm obs}}{z_{\rm i}}
\right)^2
|\by|^2
}{
 \ell_\Psi^2\bigl(1+(\beta_{\rm int}/\ell_\Psi)^2\bigr)
}
\right)
}{
1+(\beta_{\rm int}/\ell_\Psi)^2
}.
\end{align*}
The image therefore remains Gaussian but with an enlarged width
\[
\ell_{\rm eff}
=
\sqrt{
\ell_\Psi^2+ \beta^2_{\rm int}
}.
\]
The rough interface thus acts as an additional random imaging system
whose blur radius is precisely $\beta_{\rm int}$.

Let
\[
\eta:=\frac{\beta_{\rm int}}{\ell_\Psi}
\qquad \text{and} \qquad 
\rho(\by)
:=
\left(
\frac{\tilde z_{\rm obs}}{z_{\rm i}}
\right)^2
\frac{|\by|^2}{\ell_\Psi^2},
\]
then \eqref{eq:v4}--\eqref{eq:v4b} yield the signal-to-noise ratio
\begin{align}
{\rm SNR}(\by)
&=
\frac{\E^2[\CalI_{\rm opt}(\by)]}
{{\rm Var}[\CalI_{\rm opt}(\by)]}
\nonumber\\
&=
\left(
\frac{(1+2 \eta^2)^2}{1+4\eta^2}
\exp\!\left(
\frac{
4 \eta^2\,\rho(\by)
}{
(1+2\eta^2)(1+4\eta^2)
}
\right)
-1
\right)^{-1}.
\label{eq:SNR}
\end{align}
In particular,
\[
\lim_{\eta\downarrow0}
{\rm SNR}(\by)=\infty,
\]
which corresponds to the ideal flat-interface case.
As $\eta$ increases, the rough-interface blur becomes comparable to the
source width and the image fluctuations become increasingly important.
Moreover, the signal-to-noise ratio decreases away from the image center,
reflecting the fact that the coherent image decays more rapidly than the
incoherent {`speckle'} contribution.

Figure~\ref{figSNR} shows the signal-to-noise ratio
\eqref{eq:SNR} for various values of the parameters
$\eta$ and $\rho$.

\begin{figure}
\centering
\resizebox*{6cm}{!}{\includegraphics{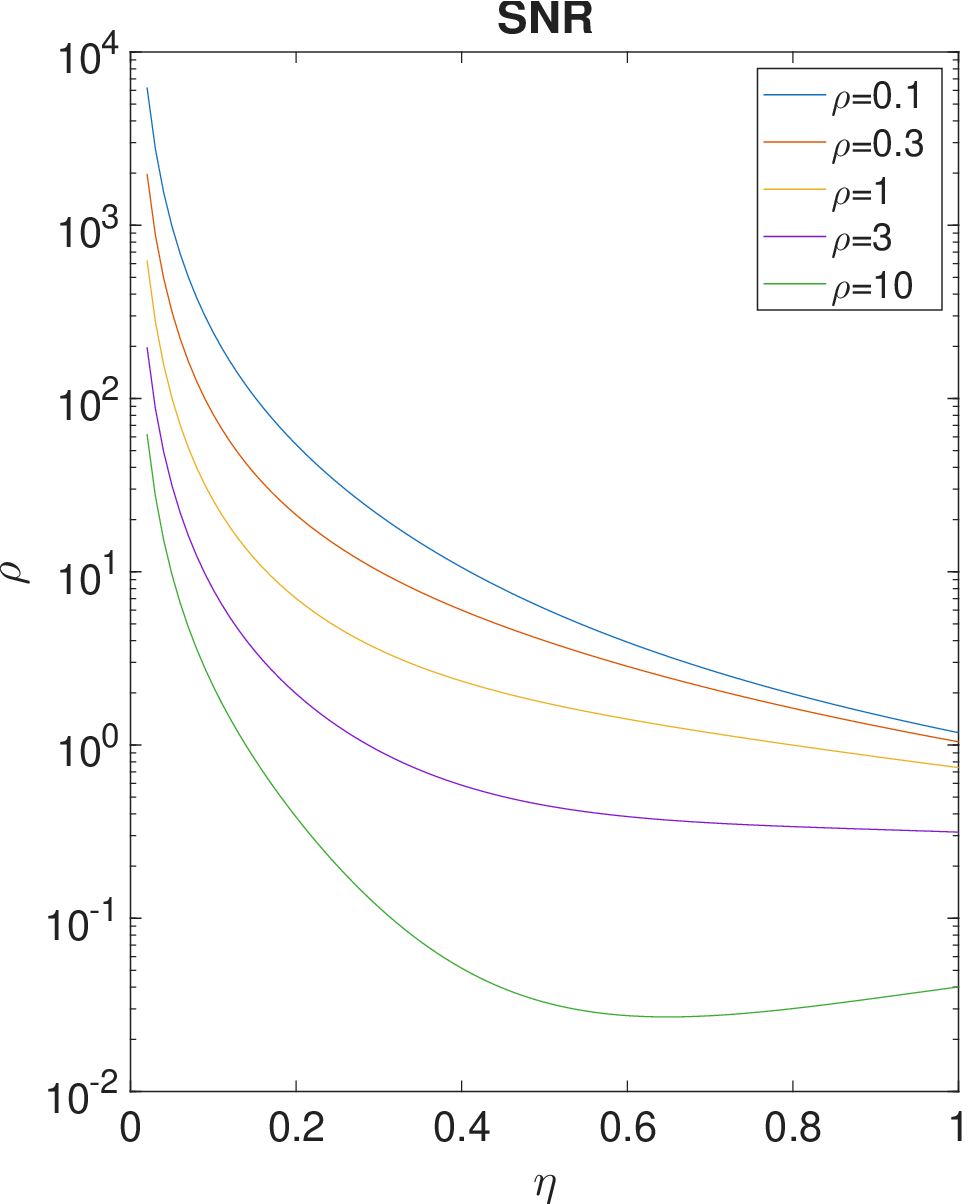}}
\caption{
Signal-to-noise ratio \eqref{eq:SNR} as a function of the relative
blurring parameter $\eta=\beta_{\rm int}/\ell_\Psi$
and the normalized image offset $\rho$.
Increasing $\eta$ corresponds to stronger shower-curtain blurring and
reduces the image stability.
}
\label{figSNR}
\end{figure}

\end{example31B}

\begin{example11B}

We finally return to the experimental configuration shown in
Figure~\ref{fig1}.  From the measured images we estimate the effective
blur width as a function of the distance between the
1951 USAF resolution chart and the rough interface.

The theory predicts, in {the strong-scattering regime, or equivalently hight-contrast or high-frequency regime, that the blurring is again  characterized by the parameter} 
\eqref{eq:regime},
\[
\beta_{\rm int}
=
z_{\rm int}
\left|
1-\frac{c_0}{c_1}
\right|
\sqrt{\CalD},
\]
so that the blur width should increase linearly with the separation
between the object and the rough interface,  {but also as the contrast or the roughness of the interface increase}.

The experimentally (roughly) estimated blur widths are shown in
Figure~\ref{figbeta} (red circles) together with the linear model
\[
\beta_{\rm int}
=
\alpha z_{\rm int},
\]
with $\alpha=0.1$ (blue line).

The observed trend is consistent with the theoretical prediction.
Most importantly, the figure illustrates the essential geometric nature
of the shower-curtain effect: the farther the object is from the rough
interface, the larger the random angular deviations accumulated during
back-propagation and hence the broader the resulting image blur.
Conversely, when the object is placed close to the interface, the random
phase distortions have insufficient propagation distance to develop into
substantial image degradation.

\begin{figure}
\centering
\resizebox*{6cm}{!}{\includegraphics{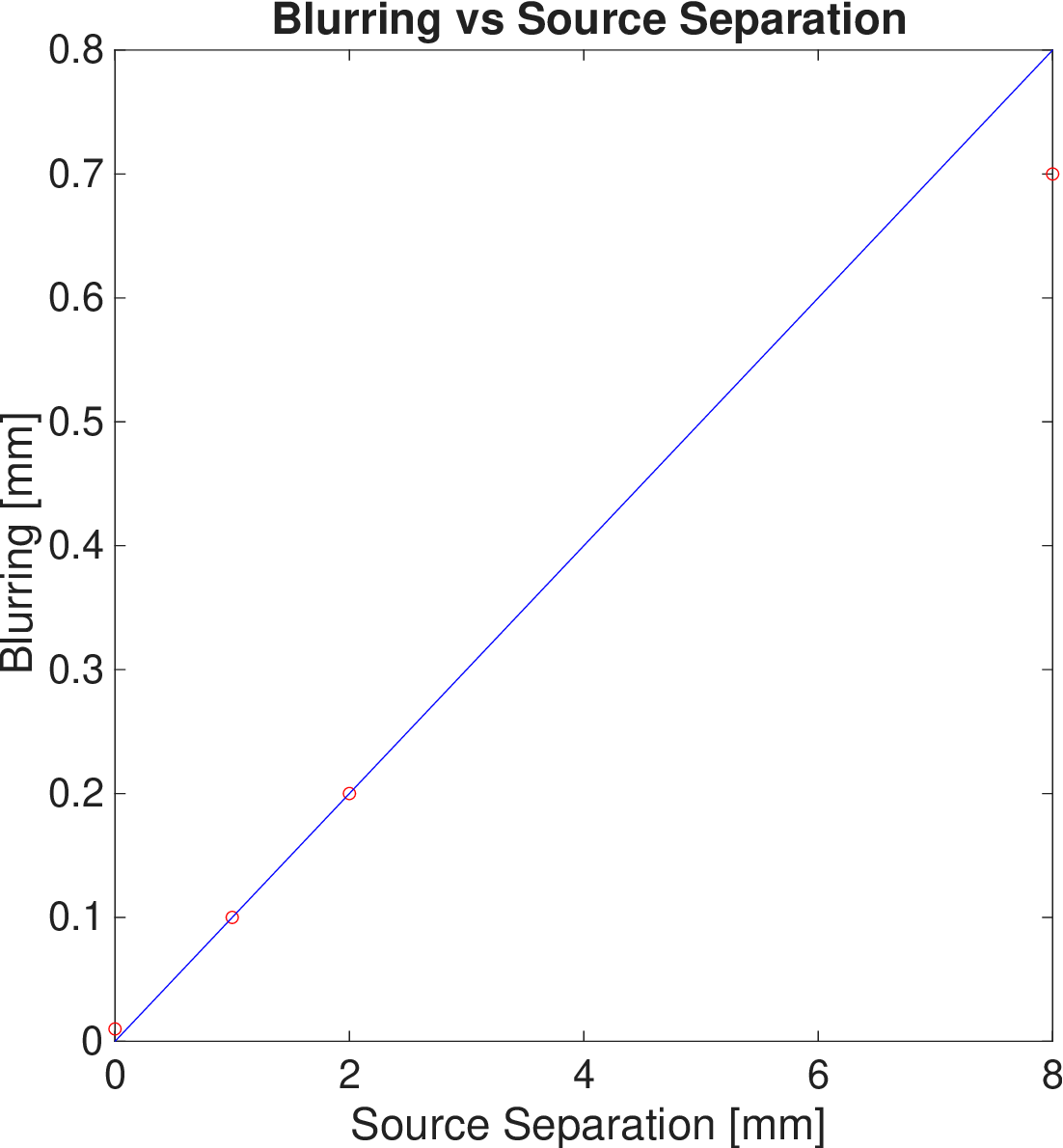}}
\caption{
Estimated blur width as a function of the distance between the
1951 USAF resolution chart and the rough interface.
The approximately linear growth agrees with the theoretical prediction
$\beta_{\rm int}\propto z_{\rm int}$ and provides direct evidence of the
shower-curtain effect.
}
\label{figbeta}
\end{figure}

\end{example11B}

\section{Conclusions}\label{sec:concl}

We have presented a mathematical theory for the classical shower-curtain effect in wave and optical imaging through a rough interface. The analysis provides an explicit characterization of how image quality depends not only on the statistical properties of the interface, but also on its geometric location relative to the source and detector. In particular, the theory quantifies how image resolution and image stability degrade as a consequence of random scattering induced by the interface.

A central outcome of the analysis is the identification of a fundamental blurring parameter,
\[
\beta_{\rm int}= z_{\rm int}
\left|1-\frac{c_0}{c_1}\right|
\sqrt{\CalD},
\]
with $\CalD$ defined by \eqref{eq:Ddef}, which combines the effects of propagation distance, medium contrast, and interface roughness into a single quantity. This parameter governs both the spatial broadening of the image and the magnitude of the image fluctuations. The matched-field and optical imaging analyses lead to the same central conclusion: the severity of the shower-curtain effect is not determined solely by the roughness of the interface itself. Rather, it is controlled by the combined effect of the propagation distance, contrast, and interface roughness, encapsulated by 
$\beta_{\rm int}$.

The results provide a quantitative explanation of a familiar physical observation. When a rough interface is located close to the source, the resulting image degradation is relatively weak. As the distance between the source and the interface increases, the random scattering generated by the interface 
cannot be appropriately compensate for, 
leading to a broader image and reduced image stability. In this sense, the shower-curtain effect is fundamentally a propagation phenomenon driven by random scattering. Consistent with this interpretation, the effect disappears entirely for a flat interface.

An important feature of the  {strong-scattering/high-contrast/high-frequency regime} considered here is that both the image resolution and the signal-to-noise ratio become asymptotically independent of the probing frequency. In the strong-scattering regime, the image statistics acquire a universal form and depend primarily on the single parameter $\beta_{\rm int}$. This universality makes it possible to obtain explicit formulas for image resolution, image fluctuations, and signal-to-noise ratio.

The comparison between matched-field and optical imaging is also noteworthy. Matched-field imaging exploits the full complex wave field, whereas optical imaging relies only on intensity measurements.
Optical imaging is found to be significantly less sensitive to random interface fluctuations. This robustness arises because large-scale random phase distortions have only an indirect effect on measured intensities. The analysis therefore provides a quantitative explanation for the  stability of optical imaging in the presence of random scattering.

The present work  has several limitations. The analysis is restricted to a single rough interface and to the asymptotic scaling regimes considered here. More general geometries, multiple scattering layers, volumetric random media, and interfaces with lower regularity and larger amplitude remain outside the scope of the current theory. In addition, the source model is coherent and time-harmonic, whereas    imaging systems may employ partially coherent illumination, broadband sources, or active probing strategies. Extending the analysis to such settings remains an important challenge.

The framework developed here nevertheless suggests a number of promising directions for future work. One natural extension is the study of imaging through combinations of rough interfaces and random media, where both volume scattering and interface scattering contribute to image degradation. Another interesting direction is matched-field imaging based on speckle correlations, which may retain useful information even in strongly scattering environments where coherent imaging becomes ineffective. It would also be valuable to investigate partially coherent and broadband illumination with oblique incidence,  
as well as active imaging configurations.

Beyond the specific problem considered here, the theory is relevant to a broad range of imaging applications involving propagation through complex structures, including imaging through biological tissue, walls, atmospheric layers, and other random interfaces. The results are also pertinent to emerging imaging concepts in which inexpensive random structures are intentionally introduced in front of a detector to create an effective aperture larger than the physical sensing device. In such settings, understanding the interplay between random scattering and configuration geometry on the one hand and image resolution, and statistical stability on the other 
is essential, and the present work provides a quantitative framework for doing so.

\section*{Ackowledgements}
\begin{itemize}
\item
 C. Gomez  received support from the french government under the France 2030
investment plan, as part of the Initiative d'Excellence d'Aix-Marseille Universit\'e -- A${}^\ast$MIDEX, through REALISE project, number AMX-22-RE-AB-057.
\\
 K. S{\o}lna  was supporter by the Air Force award FA9550-22-1-0176 issued by the Air Force Office of Scientific Research.
 \item
 We thank  Pei, Shan,  \& Xie for permission to use their Figure S1 in \citep{xie2}.
 \item
 Disclosure: ChatGPT was used to rephrase and polish text and to streamline some calculations. 
\end{itemize}

\appendix

\section{Proof of the Jump Conditions and Continuity Relations}

This appendix establishes the jump conditions induced by the source term at
the plane $z=0$ and the continuity relations satisfied by the wave field across
the randomly perturbed interface.

\subsection{Jump conditions at the source plane $z=0$}
\label{app:jump}

Consider the transverse Fourier transform introduced in
\eqref{def:FourierNew},
\begin{equation*}
\hp(\bk,z)
=
\int_{\mathbb R^2}
\check u(\bx,z)\,
e^{-i\bk\cdot\bx}\,d\bx .
\end{equation*}
Then $\hp$ satisfies to the left of the interface the one-dimensional Helmholtz equation 
\begin{equation}
\label{eq:Helmholtz2}
\partial_z^2 \hp(\bk,z)
+
\left(
\frac{\omega_{\rm o}\bcals_0^\e(\bk)}{\e}
\right)^2
\hp(\bk,z)
=
\hat\Psi(\bk)\,\delta'(z),
\end{equation}
with $\bcals_0^\e$ defined in \eqref{def_lambda}.

Since the source is localized at $z=0$, the jump condition is determined
entirely by the local behavior of the medium near the source plane.
Consequently, we may ignore the rough interface in the
derivation below without affecting the resulting jump conditions.

Away from $z=0$, the right-hand side of \eqref{eq:Helmholtz2} vanishes, and
the solution is a superposition of outgoing plane waves. Imposing radiation
conditions as $z\to\pm\infty$, we write
\begin{equation}
\label{eq:dec_hp}
\hp(\bk,z)
=
\Tbu(\bk)
e^{-i\omega_{\rm o}\lwu(\bk)z/\e}
\,\mathbf 1_{(-\infty,0)}(z)
+
\Tau(\bk)
e^{i\omega_{\rm o}\lwu(\bk)z/\e}
\,\mathbf 1_{(0,\infty)}(z) . 
\end{equation}
Substituting \eqref{eq:dec_hp} into \eqref{eq:Helmholtz2} and differentiating
in the sense of distributions, the homogeneous Helmholtz equation is satisfied 
away from $z=0$, leaving only the singular contributions
\begin{align*}
\hat\Psi(\bk)\,\delta'(z)
&=
\left(
\frac{2i\omega_{\rm o}\lwu(\bk)}{\e}
\right)
\bigl(\Tau(\bk)+\Tbu(\bk)\bigr)\delta(z)
\\
&\qquad
+
\bigl(\Tau(\bk)-\Tbu(\bk)\bigr)\delta'(z).
\end{align*}
The coefficients  of the distributions  $\delta$ and $\delta'$ 
must agree, and we therefore obtain
\begin{align*}
\Tau(\bk)+\Tbu(\bk)
&=0,
\nonumber\\
\Tau(\bk)-\Tbu(\bk)
&=\hat\Psi(\bk).
\end{align*}
Solving these equations gives
\begin{equation*}
\Tau(\bk)
=
\frac{1}{2}\hat\Psi(\bk),
\qquad
\Tbu(\bk)
=
-\frac{1}{2}\hat\Psi(\bk).
\end{equation*}
In particular,
\begin{equation}
\hp(\bk,0^+)
=
\Tau(\bk)
=
\frac{1}{2}\hat\Psi(\bk), \label{eq:icA}
\end{equation}
which is precisely the initial condition \eqref{eq:IC}.

We note that the source term is proportional to $\delta'(z)$ rather than
$\delta(z)$. Consequently, the field itself exhibits a jump across the source
plane, while the radiation condition determines the amplitudes of the outgoing
waves on either side.

\subsection{Continuity relations across the randomly perturbed interface}
\label{app:scatt}

We next justify the continuity relations
\eqref{eq:continuity_relation} across the rough interface
\[
z=z_{\rm int}(\bx)=z_{\rm int} + \e V\left(\frac{\bx}{\e^{\gamma-\frac12}}\right),
\]
and refer to \cite{gomezsolna26} for a complete discussion. 
Away from the source plane, the wave field satisfies the homogeneous
Helmholtz equation
\begin{equation}
\label{eq:wave3}
\left(
\partial_{zz}^2+\e^{-1}\Delta_\perp
\right)\check u(\bx,z)
+
\frac{\omega_{\rm o}^2}{\e^2 c^2(\bx,z)}
\check u(\bx,z)
=
0 .
\end{equation}
The coefficient $c(\bx,z)$ is piecewise constant  and exhibits a jump only across
the interface. Consequently, any singular
distributional contribution in \eqref{eq:wave3} can only arise from
discontinuities of $\check u$ or of its first derivatives across the
interface.

Let
\[
\Gamma^\e
=
\left\{
(\bx,z):
z=z_{\rm int}(\bx)
\right\}
\]
denotes the interface  {and let $\bn$ be its unit normal vector}. Suppose that
$\check u$ possesses jumps across $\Gamma^\e$. Writing \eqref{eq:wave3} in
distributional form, a jump of $\check u$ would generate a
${\delta'_{\eta}}$ contribution through the second derivatives, while a
jump of the normal derivative ${\partial_{\eta}\check u}$ would generate a
${\delta_{\eta}}$ contribution.

Since the coefficient term
\[
\frac{\omega_{\rm o}^2}{\e^2 c^2(\bx,z)}\check u
\]
contains no derivatives, it cannot cancel such singular terms,
 all singular contributions must therefore be absent. It follows
that
\[
[\check u]_{\Gamma^\e}=0,
\qquad
[ {\partial_{\eta}}\check u]_{\Gamma^\e}=0,
\]
where $[\cdot]_{\Gamma^\e}$ denotes the jump across the interface.
Hence the wave field and its normal derivative are continuous across the
rough interface,
\begin{equation*}
\check u_+=\check u_-,
\qquad
 {\partial_{\eta}}\check u_+
=
 {\partial_{\eta}}\check u_- ,
\end{equation*}
and these relations together  lead to  the continuity relations stated in
\eqref{eq:continuity_relation}.

 \section{Identification of the Effective Interface Scattering Operator}
\label{app:iscatt}

In this appendix we derive the representation \eqref{eq:wobs} for the
transmitted wave field and identify the effective scattering operator
associated with the rough interface.

We introduce wave amplitudes whose phases are centered relative to the mean
interface location $z=z_{\rm int}$. Specifically,
\begin{align}  
\check u_{\rm inc}(\bx,z)
&=
\frac{1}{(2\pi)^2}
\int
e^{i\bx\cdot\bk}
\hau(\bk)
e^{i\omega_{\rm o}\bcals_0^\e(\bk)(z-z_{\rm int})/\e}
 d\bk ,  \label{eq:cu}
 \\ \nonumber 
\check u_{\rm ref}(\bx,z)
&=
\frac{1}{(2\pi)^2}
\int
e^{i\bx\cdot\bk}
\hbr(\bk)
e^{-i\omega_{\rm o}\bcals_0^\e(\bk)(z-z_{\rm int})/\e}
 d\bk ,
 \\ 
\check u_{\rm tr}(\bx,z)
&=
\frac{1}{(2\pi)^2}
\int
e^{i\bx\cdot\bk}
\hatr(\bk)
e^{i\omega_{\rm o}\bcals_1^\e(\bk)(z-z_{\rm int})/\e}
 d\bk .  \label{eq:cu2}
\end{align}
see Figure \ref{figAc}. 
The total wave field is then expressed for $z>0$ by
\begin{equation*}
    \hat u(\bx,z)  = 
\bigl(
\check u_{\rm inc}(\bx,z)
+
\check u_{\rm ref}(\bx,z)
\bigr)
 \mathbf 1_{(0,z_{\rm int}(\bx))}(z)
+
\check u_{\rm tr}(\bx,z)
\mathbf 1_{(z_{\rm int}(\bx),\infty)}(z).
\end{equation*}
Note that in view of \eqref{eq:icA} and \eqref{eq:cu} we have
\begin{align}\label{eq:ao}
    \hau(\bk) =  \Tau(\bk) e^{i\omega_{\rm o}\lwu(\bk) z_{\rm int}/\e}  = 
\frac12
\hat\Psi(\bk)
e^{i\omega_{\rm o}\bcals^\e_0(\bk)z_{\rm int}/\e}.
\end{align}

It remains to determine the transmitted and reflected amplitudes
$\hatr$  and $\hbr$.
 The continuity conditions \eqref{eq:continuity_relation} imply that the
incident, reflected and transmitted fields must match at
\begin{equation}\label{def_zint}
z=z_{\rm int}(\bx)=z_{\rm int}
+ \e V\left(\frac{\bx}{\e^{\gamma-\frac12}}\right) :=
z_{\rm int} + 
\e \Delta z(\bx) ,
 \end{equation}
 so that we have 
 \begin{equation*} 
\check{u}(\bx, z )_{|z=z_{\rm int}(\bx)^+} = \check{u}(\bx, z)_{|z=z_{\rm int}(\bx)^-}
\end{equation*}
and
\begin{equation*}
\partial_z 
\check{u}(\bx, z)_{|z=z_{\rm int}(\bx)^+} = \partial_z \check{u}(\bx, z)_{|z=z_{\rm int}(\bx)^-}  .
\end{equation*} 
 To leading order in the paraxial scaling we then get
\begin{align*}
   e^{ i\omega_{\rm o}s_0 \Delta  z(\bx) }  \ \check u_{\rm inc} (\bx,z_{\rm int} ) 
 + e^{-  i\omega_{\rm o} s_0 \Delta  z(\bx) } 
  \ \check u_{\rm ref} (\bx,z_{\rm int} ) & = 
  e^{ i\omega_{\rm o}s_1 \Delta  z(\bx)  }  \  
 \check u_{\rm tr} (\bx,z_{\rm int} )  ,  \\
   s_0  \left(  e^{ i\omega_{\rm o}s_0 \Delta  z(\bx) }  \ \check u_{\rm inc} (\bx,z_{\rm int} ) 
 -  e^{-  i\omega_{\rm o} s_0 \Delta  z(\bx) } 
  \ \check u_{\rm ref} (\bx,z_{\rm int} )  \right)    &    = 
  s_1  e^{ i\omega_{\rm o}s_1 \Delta  z(\bx)  }  \  
 \check u_{\rm tr} (\bx,z_{\rm int} )  . 
 \end{align*}
This then gives to leading order  the relations
\begin{align}\label{eq:tran}
    \check u_{\rm tr} (\bx,z_{\rm int} )  &  \simeq  \CalT \ 
     e^{  i\omega_{\rm o}  (s_0-s_1)  \Delta  z(\bx) }  \   \check u_{\rm inc} (\bx,z_{\rm int} ), \\
         \check u_{\rm ref} (\bx,z_{\rm int} )  &  \simeq  {\mathcal R} \ 
     e^{ 2  i\omega_{\rm o}  s_0  \Delta  z(\bx) }  \   \check u_{\rm inc} (\bx,z_{\rm int} ), \nonumber 
\end{align}
 with the effective background   interface transmission   and refection coefficients defined by 
 \begin{equation*}
\CalT := \frac{2  {\bcals_0  }}{\bcals_0+\bcals_1}= \frac{2  { c_1 }}{c_0+c_1} 
\qquad\text{and}\qquad {\mathcal R} =  \frac{  c_1  -  c_0}{c_0+c_1} . 
\end{equation*}
The key point is that the rough interface enters only through the random
phase factor
\[
e^{i\omega_{\rm o}(\bcals_0-\bcals_1)\Delta z(\bx)},
\]
which modulates the transmitted field.

Using \eqref{eq:cu2}, \eqref{eq:tran} and \eqref{eq:cu}   we find
\begin{align*}
\hatr(\bk)
&=
\int
e^{-i\bx\cdot\bk}
\check u_{\rm tr}(\bx,z_{\rm int})
d\bx
\\
&\simeq
\CalT
\int
e^{-i\bx\cdot\bk}
e^{i\omega_{\rm o}(\bcals_0-\bcals_1)\Delta z(\bx)}
\check u_{\rm inc}(\bx,z_{\rm int})
d\bx
\\
& \simeq
\frac{\CalT}{(2\pi)^2}
\iint
e^{-i\bx\cdot(\bk-\bq)}
e^{i\omega_{\rm o}(\bcals_0-\bcals_1)\Delta z(\bx)}
\hau(\bq)
 d\bq d\bx .
\end{align*}
Note next that the expansion in  \eqref{eq:exp_lambda}  gives  
\begin{align*}
    e^{i\omega_{\rm o}\bcals_j^\e(\bk) z /\e}   \simeq 
       e^{i  \omega_{\rm o} \bcals_j  z  /\e}\  e^{- i  \frac{  c_j   | \bk |_2^2 z}{2 \omega_{\rm o}}}
       =
       e^{i  \omega_{\rm o} \bcals_j z  /\e} \  \hat{G}_j(\omega_{\rm o},\bk,z) , 
\end{align*}
for $\hat{G}_j$ defined  in \eqref{eq:hG}.
Using again \eqref{eq:cu2}, \eqref{eq:ao}, and \eqref{def_zint}, and recalling  the expression
\eqref{eq:K} for the scattering operator we then find
  for the transmitted wave field the approximation at $z> z_{\rm int}(\bx)$ 
\begin{align*}
\check u_{\rm tr}(\bx,z)
&\simeq
\frac{\CalT}{2(2\pi)^2}
e^{i\omega_{\rm o}(
\bcals_0 z_{\rm int}
 +
\bcals_1(z-z_{\rm int})
)/\e} \\
& \times \iint
e^{i\bx\cdot\bk}
\hat G_1(\omega_{\rm o},\bk,z-z_{\rm int})
\mathfrak K^\e
\bigl(
\omega_{\rm o}(\bcals_0-\bcals_1),
\bk,\bq
\bigr)
\hat G_0(\omega_{\rm o},\bq,z_{\rm int})
\hat\Psi(\bq)
 d\bq d\bk .
\end{align*}
Evaluating this expression in the observation plane yields the field
representation \eqref{eq:wobs}.
 \begin{figure}
\begin{center}
\begin{picture}(300,140)
 \linethickness{.35mm}
 \put(50,96){\vector(1,0){30}}
  \put(53,85){$\hau$}
  \put(115,81){\vector(-1,0){30}}
  \put(101,65){$\hbr$}
  \put(125,96){\vector(1,0){30}}
  \put(128,85){$\hatr$}
 \thinlines 
\put(10,110){$Source, \Psi$}
   \put(22,85){\vector(0,-1){0}}
 \put(22,85){\vector(0,1){15}}
  \put(24,90){$r_o$}
    \put(5,78){\includegraphics[width=1cm,angle=90]{petitsol2.eps}} 
  \put(82,130){$Rough \,  Interface$}
 \linethickness{2mm}
 \thinlines 
  \put(20,20){\vector(-1,0){0}}
 \put(20,20){\vector(1,0){100}}
  \put(60,25){$z_{\rm int}$}
    \put(120,55){\line(0,1){65}}
     \put(20,40){\vector(1,0){201}}
    \put(20,40){\vector(0,1){20}}
     \put(12,44){$\bx$}
     \put(206,30){$z=z_{\rm obs}$}
 \end{picture}
\end{center}
\caption{ The kernels of the impinging,  the transmitted and the reflected wave fields.    }
\label{figAc}
\end{figure}
 
\section{A Laplace Approximation in the Strong-Scattering Regime}
\label{app:sp}

The purpose of this appendix is to justify the asymptotic evaluation
used in Example~\ref{ex1}. The argument is in fact more general and
shows, in the strong-scattering regime, that the matched-field image
variance converges to a Gaussian smoothing of the source intensity
profile. The width of this smoothing kernel is precisely the
blurring parameter $\beta_{\rm int}$.

\begin{proposition}
Assume Assumption~\ref{ass1},  the parameterization  \eqref{eq:defSV} and let
\[
H_{\by}(\br)
=
\frac{1}{(2\pi)^2}
\iint 
e^{i\bk\cdot\br}
\left|
\Psi(\by+\bk)
\right|^2
d\bk .
\]
Suppose that $H_{\by}\in L^1(\mathbb R^2)$.
Then, in the strong-scattering regime
$|\sigma_V\Delta k|\to\infty$ and under the scaling
\begin{equation}\label{eq:gs}
\frac{\ell_V\ell_\Psi\omega_0}
     {c_0 z_{\rm int}}
=
\gamma_*
(\sigma_V\Delta k),
\qquad
\gamma_*=O(1) .
\end{equation}
 We have
\[
{\rm Var}(
\mathcal I_{\rm mf}(\by)
)
=
\left(
{\mathcal N}_{\beta_{\rm int}}
*
|\Psi|^2
\right)(\by)
+o(1),
\]
where
\[
{\mathcal N}_{\beta}(\bx)
=
\frac{1}{2\pi\beta^2}
\exp\!\left(
-\frac{|\bx|^2}{2\beta^2}
\right) \qquad \text{and}\qquad  
 \beta_{\rm int}
=
z_{\rm int}
\left|
1-\frac{c_0}{c_1}
\right|
\sqrt{\mathcal D}.
\]
\end{proposition}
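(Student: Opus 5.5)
We start from the explicit formula \eqref{eq:varmf} of Example~\ref{ex1}, which holds for $\gamma=1/2$ with Gaussian $V$:
\[
{\rm Var}(\CalI_{\rm mf}(\by))=\int \Big[\frac{1}{(2\pi)^2}\int e^{-i\bk\cdot\br}\big(\phi_{\Delta V}(\Delta k,\br)-|\phi_V(\Delta k)|^2\big)d\br\Big]\Big|\Psi\Big(\by+\tfrac{c_0z_{\rm int}}{\omega_{\rm o}}\bk\Big)\Big|^2 d\bk .
\]
We first change variables $\bk\mapsto \frac{\omega_{\rm o}}{c_0z_{\rm int}}\bk$ and use Parseval (Fubini) to move the $\br$-integral onto the source. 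This rewrites the variance as
\[
\int \big(\phi_{\Delta V}(\Delta k,\br)-e^{-(\sigma_V\Delta k)^2}\big)\,H_{\by}\Big(-\tfrac{c_0z_{\rm int}}{\omega_{\rm o}}\br\Big)\Big(\tfrac{c_0z_{\rm int}}{\omega_{\rm o}}\Big)^{-2}d\br ,
\]
with $H_{\by}$ as in the statement; the hypothesis $H_{\by}\in L^1$ makes this rearrangement legitimate. The subtracted constant $e^{-(\sigma_V\Delta k)^2}$ is then handled on its own. Its contribution equals $e^{-(\sigma_V\Delta k)^2}|\Psi(\by)|^2$, which is $o(1)$ as $|\sigma_V\Delta k|\to\infty$.

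The core is a Laplace-type argument on the remaining term. The factor $\phi_{\Delta V}(\Delta k,\br)=\exp(-(\sigma_V\Delta k)^2[1-\rho_V(|\br|/\ell_V)])$ concentrates near $\br=0$ as $\sigma_V\Delta k\to\infty$. We split the $\br$-domain into $|\br|\le\delta\ell_V$ and its complement.

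\emph{Outer region.} Since $\rho_V<1$ away from the origin (the mixing assumption forces $\rho_V(x)\to0$, so $1-\rho_V\ge c_\delta>0$ there), the integrand is bounded by $e^{-c_\delta(\sigma_V\Delta k)^2}\|H_{\by}\|_{L^1}$ times the Jacobian. This contribution is $o(1)$.

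\emph{Inner region.} We Taylor expand $1-\rho_V(x)=\tfrac{D}{2}x^2+O(x^4)$, where smoothness follows from the bounded second derivatives in Assumption~\ref{ass1}. This gives the Gaussian weight $\exp\big(-\tfrac{(\sigma_V\Delta k)^2D}{2\ell_V^2}|\br|^2\big)$. Its effective width is $\ell_V/(\sigma_V\Delta k\sqrt D)$, and the scaling \eqref{eq:gs} is chosen precisely so that this width, mapped through the factor $c_0z_{\rm int}/\omega_{\rm o}$, is of the order of the source scale $\ell_\Psi$.

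After rescaling $\br=\ell_V\bu/(\sigma_V\Delta k)$, the quartic correction becomes $O(|\bu|^4/(\sigma_V\Delta k)^2)$ and disappears in the limit by dominated convergence. We then undo Parseval: the Fourier transform of the Gaussian weight in the $H_{\by}$ variable is a Gaussian in $\bk$. Its variance per component is $(\Delta k)^2\sigma_V^2D/\ell_V^2=(\Delta k)^2\CalD$, which reproduces the kernel appearing in \eqref{eq:4K}/\eqref{eq:v3b}. Mapping back through $\bk\mapsto\frac{c_0z_{\rm int}}{\omega_{\rm o}}\bk$ gives a Gaussian of standard deviation $\frac{c_0z_{\rm int}}{\omega_{\rm o}}\Delta k\sqrt{\CalD}=z_{\rm int}|1-c_0/c_1|\sqrt\CalD=\beta_{\rm int}$, using \eqref{eq:dkdef}. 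This yields $(\mathcal N_{\beta_{\rm int}}*|\Psi|^2)(\by)$, noting that $|\Psi|^2$ is even under the reflection coming from the sign in $H_{\by}(-\cdot)$ convolved with a symmetric kernel.

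The main obstacle is making the dominated-convergence step uniform. The limit keeps $\beta_{\rm int}$ fixed while $\sigma_V\Delta k\to\infty$, so $\ell_V$ and $\sigma_V$ vary jointly, and we must check that the truncation radius $\delta\ell_V$ in $\br$ still corresponds to a growing radius $\delta\,\sigma_V\Delta k$ in the rescaled variable $\bu$. We first try a fixed $\delta$ with the $L^1$ bound on $H_{\by}$ for the tail, together with the bound $|e^{-a}-e^{-b}|\le|a-b|$ on the Taylor remainder inside the ball. If that remainder estimate is not uniform, we fall back to $\delta=\delta(\sigma_V\Delta k)\to0$ slowly, for example $\delta=(\sigma_V\Delta k)^{-1/2}$.
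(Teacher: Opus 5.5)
Your proposal is correct and follows essentially the paper's route: Parseval onto $H_{\by}$, tail control from $H_{\by}\in L^1$, the quadratic expansion $1-\rho_V(x)\approx \tfrac{D}{2}x^2$ combined with the scaling \eqref{eq:gs} to obtain a limiting Gaussian weight independent of $\sigma_V\Delta k$, and a return to physical space that gives the width $\beta_{\rm int}$. The paper truncates at a fixed radius in the $\ell_\Psi$-rescaled variable and only uses that the weight is bounded by one, so your lower bound $1-\rho_V\ge c_\delta$ and your uniformity worry are unnecessary (plain dominated convergence with dominating function $|H_{\by}|$ suffices); note also a small slip, that after your change of variables the argument should be $H_{\by}\bigl(-\tfrac{\omega_{\rm o}}{c_0 z_{\rm int}}\br\bigr)$ rather than $H_{\by}\bigl(-\tfrac{c_0 z_{\rm int}}{\omega_{\rm o}}\br\bigr)$, which does not affect your final identification of the width $\beta_{\rm int}$.
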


This result explains why the image statistics become universal in the
strong-scattering regime. Once the coherent field has been suppressed,
the detailed form of the interface covariance function no longer
matters. Only its local curvature at the origin, encoded by
$\mathcal D$, survives in the asymptotic limit. Consequently, the
effect of the rough interface is equivalent to a Gaussian blurring of
the source intensity with width $\beta_{\rm int}$.

 \begin{proof} 
 
Consider the second moment of the matched field imaging function, the first term in \eqref{eq:varmf}
\begin{align*}
     \E\left[  \CalI_{\rm mf}^2(\by)  \right]   & = \frac{1}{(2\pi)^2}
      \iint     e^{i \bk \br}     \exp^{- ( 1- \rho_V(|\br|/\ell_V)) (\sigma_V \Delta k)^2    }  
\left\vert \Psi\left(\by + \bk \frac{c_0 z_{\rm int}}{\omega_{\rm o}} \right)  \right\vert^2
    d\bk d\br 
   .
   \end{align*}
We write  here 
\begin{align*}
\Psi\left(\by\right)  =  \Psi'\left(\frac{\by}{\ell_\Psi}\right)  ,
\end{align*}
for $\ell_\Psi$ the characteristic support of $\Psi$ and then get via a change of variables 
(but retaining the variable notation) 
\begin{align*}
     \E\left[  \CalI_{\rm mf}^2(\ell_\Psi \by)  \right]   & = \frac{1}{(2\pi)^2}
      \iint     e^{i \bk \br}     e^{- \left( 1- \rho_V\left(|\br| \frac{c_0 z_{\rm int}}{\ell_V \ell_\Psi \omega_{\rm o} }\right)\right) (\sigma_V \Delta k)^2    }  
\left\vert \Psi'\left(\by  + \bk \right)  \right\vert^2
    d\bk d\br 
   .
   \end{align*}
We are interested in the critical regime with partial blurring and we accordingly
assume the scaling in  \eqref{eq:gs}.   We are furthermore  interested  in relatively
strong scattering, $|\sigma_V \Delta k| \gg 1$, so that the mean field in the observation plane is relatively small. 
We denote
\begin{align*}
 {\mathcal H}_\by(\br) = \frac{1}{(2\pi)^2}
 \iint   {e^{-i \bk \br}}  \left\vert \Psi'\left(\by  + \bk \right)  \right\vert^2
    d\bk  ,
\end{align*}
and  as stated assume $\mathcal{H}_{\by}\in L^1(\mathbb R^2)$. We then have 
\begin{align*}
     \E\left[  \CalI_{\rm mf}^2(\ell_\Psi \by)  \right]   & = 
      \int          e^{- \left( 1- \rho_V\left(\frac{|\br| }{ \gamma_* (\sigma_V \Delta k)  }\right)\right) (\sigma_V \Delta k)^2    }  
    {\mathcal H}_\by(\br)
    d\br 
   .
   \end{align*}
   Next, choose any $\D >0$ and $r_\D > 0 $ so that
\begin{align}\label{eq:Hb}
            \int_{|\br| > r_\D}     
    \vert {\mathcal H}_\by(\br) \vert
    d\br  < \D/3
   .
   \end{align}
Then 
  \begin{align*}
     \E\left[  \CalI_{\rm mf}^2(\ell_\Psi \by)  \right]   & = 
      \int_{|\br| \leq  r_\D}          e^{- \left( 1- \rho_V\left(\frac{|\br| }{ \gamma_* (\sigma_V \Delta k)  }\right)\right) (\sigma_V \Delta k)^2    }  
    {\mathcal H}_\by(\br)
    d\br  + \Delta I^{(1,\D)}(\by) 
   ,
   \end{align*}
with $|\Delta I^{(1,\D)}| < \D/3$. 
In view of Assumption \ref{ass1} we have
\[
\rho_V(r)
=
1-\frac{D}{2}r^2+o(r^2) ,
\qquad
D=-\rho_V''(0)>0 .
\]
Hence, we also have  for bounded $r$
\[
\left(
1-\rho_V
\!\left(
\frac{r}
     {\gamma_*(\sigma_V\Delta k)}
\right)
\right)
(\sigma_V\Delta k)^2
=
\frac{D r^2}{2\gamma_*^2}
+ h(r; \sigma_V\Delta k, \gamma_*) , 
\]
with 
\[
\lim_{s \to \infty} \sup_{0<r<r_\D} \vert h(r ;s,\gamma_*)\vert = 0 .
\]
 We can then write 
  \begin{align*}
     \E\left[  \CalI_{\rm mf}^2(\ell_\Psi \by)  \right]   & = 
      \int_{|\br| \leq  r_\D}          e^{-  \frac{D |\br|^2}{2 \gamma_*^2}    }  
      \left( 1+  
      \left(e^{{h}(|\br|; \sigma_V\Delta k, \gamma_*) } -1 \right)
      \right)
    {\mathcal H}_\by(\br)
    d\br  + \Delta I^{(1,\D)}(\by) 
   . 
   \end{align*}
We can then choose $s(\D,\gamma_*)>0 $ so that for $|\sigma_V \Delta k| > s(\D,\gamma_*)$ 
 \begin{align*}
     \E\left[  \CalI_{\rm mf}^2(\ell_\Psi \by)  \right]   & = 
      \int_{|\br| \leq  r_\D}          e^{-  \frac{D |\br|^2}{2 \gamma_*^2}    }  
      {\mathcal H}_\by(\br)
    d\br  + \Delta I^{(2,\D)}(\by) 
   ,
   \end{align*} 
with $|\Delta I^{(2,\D)}| < 2\D/3$.  We then also have in view of \eqref{eq:Hb}
\begin{align*}
     \E\left[  \CalI_{\rm mf}^2(\ell_\Psi \by)  \right]   & = 
      \int   e^{-  \frac{D |\br|^2}{2 \gamma_*^2}    }  
      {\mathcal H}_\by(\br)
    d\br  + \Delta I^{(3,\D)}(\by) 
   ,
   \end{align*} 
 with $|\Delta I^{(3,\D)}| <  \D$.  We can conclude 
\begin{align*}
     \E\left[  \CalI_{\rm mf}^2(\by)  \right]   & = \frac{1}{(2\pi)^2}
      \iint      {e^{-i \bk \br}}    e^{-  \frac{D |\br|^2}{2 (\gamma_* /\ell_\Psi)^2}    }  
\left\vert \Psi\left(\by + \bk  \right)  \right\vert^2
    d\bk d\br 
       \nonumber     \\ & \hbox{}  
     + \Delta I^{(3,\D)}(\by/\ell_\Psi) 
   .
   \end{align*}
Finally, note that
\begin{align*}
 \frac{\ell_\Psi \sqrt{D}}{\gamma_* } = \beta_{\rm int} ,
\end{align*}
with  $\beta_{\rm int}$ given in \eqref{def:Bm} and \eqref{eq:defB2} and we can write
under the scaling \eqref{eq:gs} 
\begin{align*}
   \lim_{|\sigma_V \Delta k| \to \infty }  \E\left[  \CalI_{\rm mf}^2(\by)  \right]   & =  
     \lim_{|\sigma_V \Delta k| \to \infty }  {\rm Var}\left[  \CalI_{\rm mf}^2(\by)  \right]  = 
     \left({\mathcal N}_{\beta_{\rm int}} \ast |\Psi|^2 \right)(\by)
      ,
   \end{align*}
     which  concludes the proof of the proposition. 
   
\end{proof}

\section{Fourth Moment of Scattering Operator}\label{app:lem}

In this appendix we derive Proposition~\ref{lem1}.  The result provides the asymptotic fourth-order statistics
of the scattering operator and is the key ingredient in the derivation of the
optical image covariance in Proposition~\ref{prop4}.

\begin{proposition}[Fourth-order statistics of the scattering operator]

Assume that $V$ satisfies Assumption~\ref{ass1}, let
$\gamma=1/2$, and consider the strong-scattering regime
\eqref{eq:regime}.
Then in this regime, as
$\omega_{\rm o}(\bcals_0-\bcals_1)\sigma_V\to\infty$,
the fourth-order moment of the scattering operator
$\mathfrak K^\e$ converges in the sense of tempered
distributions to
\begin{align}
&\E\Big[
\mathfrak K^\e(\tau,\bk,\bk')\,
\overline{\mathfrak K^\e(\tau,\bq,\bq')}\,
\overline{\mathfrak K^\e(a\tau,\tbk,\tbk')}\,
\mathfrak K^\e(a\tau,\tbq,\tbq')
\Big]
\nonumber\\
&\qquad =
\delta\!\big((\bk'-\bk)-(\bq'-\bq)\big)\,
\delta\!\big((\tbk'-\tbk)-(\tbq'-\tbq)\big)
\nonumber\\
&\qquad\quad\times
\delta\!\big(a(\bk'-\bk)-(\tbk'-\tbk)\big)\,
\frac{
\exp\!\left(
-\dfrac{ |\tbq'-\tbq|^2}
{2(a\tau)^2\CalD}
\right)}
{2\pi\tau^2\CalD},
\label{eq:4K_prop}
\end{align}
for any constant $a$, where
\[
\CalD=-\rho_V''(0)\sigma_V^2/\ell_V^2
\]
is the curvature parameter introduced in
Remark~\ref{remark1}.

\end{proposition}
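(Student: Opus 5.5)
We start by writing the fourth moment directly from the definition \eqref{eq:K} with $\gamma=1/2$, so that $\mathfrak K^\e(\tau,\bk,\bk')=(2\pi)^{-2}\int e^{i(\bk'-\bk)\cdot\bx}e^{i\tau V(\bx)}d\bx$ and $\e$ disappears. The four-fold product then becomes an integral over $\bx_1,\bx_2,\bx_3,\bx_4$ of
\[
e^{i(\bk'-\bk)\cdot\bx_1-i(\bq'-\bq)\cdot\bx_2-i(\tbk'-\tbk)\cdot\bx_3+i(\tbq'-\tbq)\cdot\bx_4}
\,\E\Big[e^{i\tau(V(\bx_1)-V(\bx_2))+ia\tau(V(\bx_4)-V(\bx_3))}\Big].
\]
Since $V$ is a Gaussian field, the expectation equals $\exp(-\frac12\tau^2 Q)$, where $Q$ is the variance of the linear combination $V(\bx_1)-V(\bx_2)+a(V(\bx_4)-V(\bx_3))$, expressed through $C$. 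Stationarity lets us factor out one translation variable, which produces the total-momentum delta $\delta\bigl((\bk'-\bk)-(\bq'-\bq)-(\tbk'-\tbk)+(\tbq'-\tbq)\bigr)$. We then work in the relative coordinates $\br=\bx_1-\bx_2$, $\tilde\br=\bx_4-\bx_3$ and $\bs=\bx_1-\bx_4$.

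The key step is the strong-scattering limit $\tau\sigma_V\to\infty$. The factor $\exp(-\frac12\tau^2Q)$ is exponentially small unless $Q=O(\tau^{-2})$. We identify where this happens, i.e. the configurations in which the linear combination nearly cancels pathwise. Because $V$ is rapidly mixing and the coefficients are $\pm1$ and $\pm a$, generic $a$ allows cancellation only when $\br$ and $\tilde\br$ are both small, of order $(\tau\sigma_V)^{-1}\ell_V$. In that region we Taylor expand using Assumption~\ref{ass1}:
\[
V(\bx_1)-V(\bx_2)\approx\nabla V(\bx_1)\cdot\br,\qquad V(\bx_4)-V(\bx_3)\approx\nabla V(\bx_4)\cdot\tilde\br .
\]
It remains to treat the separation $\bs$. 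For $|\bs|$ of order one, the two gradients are asymptotically correlated through the smooth covariance. In the strong-scattering scaling, however, the relevant $|\bs|$ is also small relative to $\ell_V$, so that $\nabla V(\bx_1)\approx\nabla V(\bx_4)=:\mathbf g$, with $\mathbf g$ Gaussian of covariance $\CalD\,I$ by isotropy and \eqref{eq:Ddef}. This gives
\[
\E[\cdots]\approx \E\bigl[e^{i\tau\mathbf g\cdot(\br+a\tilde\br)}\bigr]=\exp\bigl(-\tfrac12\tau^2\CalD|\br+a\tilde\br|^2\bigr).
\]
This is precisely the ``single common random tilt'' picture from the Discussion after Proposition~\ref{prop4}. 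Integrating over $\bs$ then yields a second delta, coupling $\bk'-\bk$ to $\tbk'-\tbk$, and writing $\br+a\tilde\br$ in suitable combined variables gives the remaining delta $\delta\bigl(a(\bk'-\bk)-(\tbk'-\tbk)\bigr)$. The last $\br$ integral is a Fourier transform of a Gaussian, which produces the factor $\exp\bigl(-|\tbq'-\tbq|^2/(2(a\tau)^2\CalD)\bigr)/(2\pi\tau^2\CalD)$, up to the Jacobian from the change of variables. Convergence in tempered distributions is then checked by pairing with Schwartz test functions and applying dominated convergence after rescaling $\br,\tilde\br$ by $\tau$, exactly as in the Laplace argument of Appendix~\ref{app:sp}, with the expansion of $\rho_V$ near $0$ supplying the uniform remainder control.

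The main obstacle is the justification of the $\bs$-localization, i.e. that $\nabla V(\bx_1)\approx\nabla V(\bx_4)$. In a genuine stationary field the contribution from $|\bs|\gtrsim\ell_V$ factorizes into a product of two second moments. That product has its own (pair-factorized) delta structure, and it must be argued to be of lower order or absorbed under the test-function pairing. I would first try to show this by comparing the measure of the near-diagonal set with the two-pair factorized term, using the scaling \eqref{eq:gs}. If that fails, I would reinterpret the stated limit as the leading-order contribution in the common-displacement regime. Matching the precise Jacobian normalization to \eqref{eq:4K_prop} is the other delicate bookkeeping point.
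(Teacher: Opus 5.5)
Your route is the paper's route. Both arguments use the Gaussian characteristic function and stationarity to extract the total-transfer delta. Both then apply the local quadratic (common-gradient) approximation, which gives variance $\CalD|\br+a\tilde\br|^2$, and integrate out the complementary variables to produce the two remaining deltas and the Gaussian.

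The step you flag is genuinely missing, and it is missing in the paper too. The paper asserts that strong scattering concentrates $\by_1,\by_2,\by_3$ at the origin, and then integrates $\bu=\by_2$, $\bv=\by_3$ over all of $\R^2$. The reason you offer (``in the strong-scattering scaling the relevant $|\bs|$ is small'') is not correct. Set $\Lambda:=\tau\sqrt{\CalD}$. For generic $a$, the factor $\E[\cdots]$ confines $\br$ and $\tilde\br$ to scale $\Lambda^{-1}$ but does not confine $\bs$ at all. For $|\bs|\gg\ell_V$ it tends to $\phi_{\Delta V}(\tau,\br)\,\phi_{\Delta V}(a\tau,\tilde\br)$, the product of the two second moments.

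Your first plan (compare the near-diagonal set with the factorized term using \eqref{eq:gs}) therefore cannot work:
\begin{itemize}
\item If momenta are measured in units of $\Lambda$, the only normalization in which the stated Gaussian has order-one width, the factorized piece is of order one, just like the limit. It is not lower order.
\item If test functions are held fixed, the stated formula is itself of the wrong order. For generic $a$ the true moment confines four transverse dimensions ($\br$ and $\tilde\br$), while the three-delta expression confines only two ($\br+a\tilde\br$).
\end{itemize}
So a reinterpretation is unavoidable, and it has a precise form.

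Read the convergence after rescaling all momenta by $\Lambda$, i.e.\ pair $M^\e$ with $\chi(\cdot/\Lambda)$ for $\chi$ Schwartz. Let $\Theta_\chi(\bx)=\int\chi\,e^{i\Phi}$, integrated over all sixteen momentum coordinates with $\Phi$ your phase. Because the phase couples each $\bx_j$ to its own transfer ($\bk'-\bk$, $\bq'-\bq$, $\tbk'-\tbk$, $\tbq'-\tbq$), $\Theta_\chi$ is a Schwartz function of $(\bx_1,\dots,\bx_4)$. The substitution $\bx\mapsto\bx/\Lambda$ then gives
\[
\Lambda^{-8}\bigl\langle M^\e,\chi(\cdot/\Lambda)\bigr\rangle
=\frac{1}{(2\pi)^{8}}\int_{\R^8}\E\Bigl[e^{i\tau\sum_j c_jV(\bx_j/\Lambda)}\Bigr]\,\Theta_\chi(\bx)\,d\bx ,
\qquad (c_j)=(1,-1,-a,a).
\]
Every $\bx_j$, and hence $\bs$, is thus localized at scale $\Lambda^{-1}\ll\ell_V$ by the test function, not by the dynamics.

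Since $\sum_jc_j=0$ and $\rho_V(s)=1-\frac{D}{2}s^2+o(s^2)$, you get $\tau^2{\rm Var}\bigl(\sum_jc_jV(\bx_j/\Lambda)\bigr)\to|\sum_jc_j\bx_j|^2$ pointwise. The integrand is dominated by $|\Theta_\chi|$, so dominated convergence yields the correspondingly rescaled pairing of \eqref{eq:4K_prop}. This holds for every $a\neq0$. It needs no separate Taylor expansion, no cutoff in $\bs$, and no split into factorized and near-diagonal parts. Those two parts are individually of order one in this pairing, and only their sum has the stated limit: ``absorbed'' is the right word, ``lower order'' is not.

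Finally, the Jacobian is not delicate. The map $(\br,\tilde\br)\mapsto(\br+a\tilde\br,\tilde\br)$ has determinant one, and $(2\pi)^{-6}(2\pi)^2(2\pi)^2\,2\pi/(\tau^2\CalD)=1/(2\pi\tau^2\CalD)$ exactly.
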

 
\begin{remark}
The three delta functions in
\eqref{eq:4K_prop}  express conservation of transverse momentum shift (spatial frequency) across the four scattering events. In the strong-scattering limit only scattering configurations satisfying these conservation relations contribute at leading order. 
The remaining Gaussian factor describes the statistical
distribution of the momentum transfer.
Its variance is determined solely by the curvature parameter
$\CalD$ of the interface covariance function.
Consequently, in the strong-scattering regime the fourth-order
statistics become universal and depend on the rough interface
only through the  parameter $\CalD$.
%
 \end{remark}

\begin{proof}

Recall the definition of the scattering operator \eqref{eq:K},
\[
\mathfrak K^\e(\tau,\bk,\bk') = \frac{1}{(2\pi)^2}
\int 
e^{i(\bk'-\bk)\cdot\bx}
e^{i\tau V(\bx)}
d\bx .
\]
We consider the fourth-order moment
\begin{equation*}
M^\e
=
\E\Big[
\mathfrak K^\e(\tau,\bk,\bk')
\overline{\mathfrak K^\e(\tau,\bq,\bq')}
\overline{\mathfrak K^\e(a\tau,\tbk,\tbk')}
\mathfrak K^\e(a\tau,\tbq,\tbq')
\Big].
\end{equation*}
Substituting the definition of $\mathfrak K^\e$ gives
\begin{equation*}
M^\e
=
\frac{1}{(2\pi)^8}
\iiiint
e^{i\Phi}
\E\left[
e^{i\tau(
V(\bx)-V(\bx_1)
-aV(\bx_2)
+aV(\bx_3))}
\right]
d\bx d\bx_1 d\bx_2 d\bx_3 ,
\end{equation*}
where
\[
\Phi = (\bk'-\bk) \cdot \bx
-(\bq'-\bq) \cdot \bx_1
-(\tbk'-\tbk) \cdot \bx_2
+(\tbq'-\tbq) \cdot \bx_3 .
\]
Using stationarity of $V$ we introduce
\[
\by_1=\bx_1-\bx,
\qquad
\by_2=\bx_2-\bx,
\qquad
\by_3=\bx_3-\bx .
\]
Integration with respect to $\bx$ immediately yields
\begin{align}
M^\e
&=
\delta \Big(
(\bk'-\bk)
-(\bq'-\bq)
-(\tbk'-\tbk)
+(\tbq'-\tbq)
\Big)
\nonumber\\
&\qquad\times \frac{1}{(2\pi)^6}
\int
e^{-i(\bq'-\bq)\cdot\by_1}
e^{-i(\tbk'-\tbk)\cdot\by_2}
e^{i(\tbq'-\tbq)\cdot\by_3}
\nonumber\\
&\qquad\times
\E\left[
e^{i\tau(
V({\bf 0})
-V(\by_1)
-aV(\by_2)
+aV(\by_3))}
\right]
d\by_1 d\by_2 d\by_3 .
\label{eq:Me}
\end{align}
The strong-scattering regime implies that the dominant contribution comes from
configurations for which
$
\by_1,\
\by_2, \hbox{and}\,  \by_3
$
are close to the origin. Using Assumption~\ref{ass1} and the same local
quadratic expansion employed in Appendix~\ref{app:sp},
\[
C(0)
-\frac{\mathcal D}{2}|\bx|^2
+o(|\bx|^2),
\]
one finds
\begin{align*}
{\rm Var}\big(
V({\bf 0})
-V(\by_1)
-aV(\by_2)
&
+aV(\by_3)
\big)
\nonumber \\
&=
\mathcal D
\big|
\by_1 + a(\by_2-\by_3)
\big|^2
+
o(|\by|^2).
\end{align*}
Consequently, {as the interface fluctuations $V$ are assumed to be Gaussian}
\begin{align}
\E\left[
e^{i\tau(
V({\bf 0})
-V(\by_1)
-aV(\by_2)
+aV(\by_3))}
\right]
\simeq
\exp\left(
-\frac{\tau^2\mathcal D}{2}
\big|
\by_1 + a(\by_2-\by_3)
\big|^2
\right).
\label{eq:appDr2}
\end{align}
In view of \eqref{eq:appDr2} we introduce the new variables
\[
\bz = \by_1 + a(\by_2-\by_3),
\qquad
\bu=\by_2,
\qquad
\bv=\by_3 .
\]
Substituting these into \eqref{eq:Me}  gives
\begin{align*}
M^\e
&\sim
\delta\Big(
(\bk'-\bk)
-(\bq'-\bq)
-(\tbk'-\tbk)
+(\tbq'-\tbq)
\Big)
\nonumber\\
&\quad\times
\frac{1}{(2\pi)^6}
\int
e^{-i(\bq'-\bq)\cdot\bz}
e^{i\Lambda_u\cdot\bu}
e^{- i\Lambda_v\cdot\bv}
\exp\left(
-\frac{\tau^2\mathcal D}{2}
|\bz|^2
\right)
d\bz d\bu d\bv ,
\end{align*}
where
\[
\Lambda_u = a(\bq'-\bq) - (\tbk'-\tbk),
 \qquad 
 \Lambda_v  = a(\bq'-\bq) -(\tbq'-\tbq).
\]
Next integration with respect to $\bu$ and $\bv$ produces two  delta functions
and integration with respect to $\bz$ produces the Gaussian density, finally, 
 after recombining the three delta functions we then get 
  \begin{align*}
M^\e
&\sim
\delta\big(
(\bk'-\bk)-(\bq'-\bq)
\big)
\delta\big(
(\tbk'-\tbk)-(\tbq'-\tbq)
\big)
\nonumber\\
&\qquad\times
\delta\big(
a(\bk'-\bk)-(\tbk'-\tbk)
\big)
\frac{
\exp\left(
-\dfrac{|\tbq'-\tbq|^2}
{2(a\tau)^2\mathcal D}
\right)
}
{2\pi\tau^2\mathcal D},
\end{align*}
which is precisely \eqref{eq:4K}.
  
\end{proof}
      
 \end{document}